\DeclareMathOperator{\wt}{\mathrm{wt}}
\DeclareMathOperator{\EE}{\mathbb{E}} 
\DeclareMathOperator{\AME}{\mathrm{AME}}
\DeclareMathOperator{\Qex}{\mathrm{Qex}}
\DeclareMathOperator{\ex}{\mathrm{ex}}
\newcommand{\set}[2]{\left\{#1 \, \middle| \, #2 \right\}}
\newcommand{\card}[1]{\left| #1 \right|}
\newcommand{\F}{\mathbb{F}} 
\newcommand{\floor}[1]{\left\lfloor #1 \right\rfloor} 
\newcommand{\CC}{\mathbb{C}}
\let\oldtheequation\theequation
\renewcommand\tagform@[1]{\maketag@@@{\ignorespaces#1\unskip\@@italiccorr}}
\renewcommand\theequation{(\oldtheequation)}
\newtheorem{theorem}{Theorem}
\newtheorem{definition}{Definition}
\newtheorem{lemma}{Lemma}
\newtheorem{proposition}{Proposition}
\newtheorem{conjecture}{Conjecture}
\newtheorem{example}{Example}
\newtheorem{corollary}{Corollary}
\def\squareforqed{\hbox{\rlap{$\sqcap$}$\sqcup$}}
\def\qed{\ifmmode\squareforqed\else{\unskip\nobreak\hfil
		\penalty50\hskip1em\null\nobreak\hfil\squareforqed
		\parfillskip=0pt\finalhyphendemerits=0\endgraf}\fi}
\def\endenv{\ifmmode\;\else{\unskip\nobreak\hfil
		\penalty50\hskip1em\null\nobreak\hfil\;
		\parfillskip=0pt\finalhyphendemerits=0\endgraf}\fi}
\newenvironment{proof}{\noindent \textbf{{Proof.~} }}{\qed}
\def\Dbar{\leavevmode\lower.6ex\hbox to 0pt
	{\hskip-.23ex\accent"16\hss}D}
\def\url@leostyle{%
	\@ifundefined{selectfont}{\def\UrlFont{\sf}}{\def\UrlFont{\small\ttfamily}}}
\def\bcj{\begin{conjecture}}
	\def\ecj{\end{conjecture}}
\def\bcr{\begin{corollary}}
	\def\ecr{\end{corollary}}
\def\bd{\begin{definition}}
	\def\ed{\end{definition}}
\def\bea{\begin{eqnarray}}
	\def\eea{\end{eqnarray}}
\def\bem{\begin{enumerate}}
	\def\eem{\end{enumerate}}
\def\bex{\begin{example}}
	\def\eex{\end{example}}
\def\bim{\begin{itemize}}
	\def\eim{\end{itemize}}
\def\bl{\begin{lemma}}
	\def\el{\end{lemma}}
\def\bpf{\begin{proof}}
	\def\epf{\end{proof}}
\def\bpp{\begin{proposition}}
	\def\epp{\end{proposition}}
\def\bqu{\begin{question}}
	\def\equ{\end{question}}
\def\br{\begin{remark}}
	\def\er{\end{remark}}
\def\bt{\begin{theorem}}
	\def\et{\end{theorem}}
\def\btb{\begin{tabular}}
	\def\etb{\end{tabular}}
	\newcommand{\nc}{\newcommand}
	\nc{\bbA}{\mathbb{A}} \nc{\bbB}{\mathbb{B}} \nc{\bbC}{\mathbb{C}}
	\nc{\bbD}{\mathbb{D}} \nc{\bbE}{\mathbb{E}} \nc{\bbF}{\mathbb{F}}
	\nc{\bbG}{\mathbb{G}} \nc{\bbH}{\mathbb{H}} \nc{\bbI}{\mathbb{I}}
	\nc{\bbJ}{\mathbb{J}} \nc{\bbK}{\mathbb{K}} \nc{\bbL}{\mathbb{L}}
	\nc{\bbM}{\mathbb{M}} \nc{\bbN}{\mathbb{N}} \nc{\bbO}{\mathbb{O}}
	\nc{\bbP}{\mathbb{P}} \nc{\bbQ}{\mathbb{Q}} \nc{\bbR}{\mathbb{R}}
	\nc{\bbS}{\mathbb{S}} \nc{\bbT}{\mathbb{T}} \nc{\bbU}{\mathbb{U}}
	\nc{\bbV}{\mathbb{V}} \nc{\bbW}{\mathbb{W}} \nc{\bbX}{\mathbb{X}}
	\nc{\bbZ}{\mathbb{Z}}
	\nc{\bA}{{\bf A}} \nc{\bB}{{\bf B}} \nc{\bC}{{\bf C}}
	\nc{\bD}{{\bf D}} \nc{\bE}{{\bf E}} \nc{\bF}{{\bf F}}
	\nc{\bG}{{\bf G}} \nc{\bH}{{\bf H}} \nc{\bI}{{\bf I}}
	\nc{\bJ}{{\bf J}} \nc{\bK}{{\bf K}} \nc{\bL}{{\bf L}}
	\nc{\bM}{{\bf M}} \nc{\bN}{{\bf N}} \nc{\bO}{{\bf O}}
	\nc{\bP}{{\bf P}} \nc{\bQ}{{\bf Q}} \nc{\bR}{{\bf R}}
	\nc{\bS}{{\bf S}} \nc{\bT}{{\bf T}} \nc{\bU}{{\bf U}}
	\nc{\bV}{{\bf V}} \nc{\bW}{{\bf W}} \nc{\bX}{{\bf X}}
	\nc{\ba}{{\bf a}} \nc{\be}{{\bf e}} \nc{\bu}{{\bf u}}
	\nc{\brr}{{\bf r}} \nc{\bx}{{\bf x}}
	\nc{\cA}{{\cal A}} \nc{\cB}{{\cal B}} \nc{\cC}{{\cal C}}
	\nc{\cD}{{\cal D}} \nc{\cE}{{\cal E}} \nc{\cF}{{\cal F}}
	\nc{\cG}{{\cal G}} \nc{\cH}{{\cal H}} \nc{\cI}{{\cal I}}
	\nc{\cJ}{{\cal J}} \nc{\cK}{{\cal K}} \nc{\cL}{{\cal L}}
	\nc{\cM}{{\cal M}} \nc{\cN}{{\cal N}} \nc{\cO}{{\cal O}}
	\nc{\cP}{{\cal P}} \nc{\cQ}{{\cal Q}} \nc{\cR}{{\cal R}}
	\nc{\cS}{{\cal S}} \nc{\cT}{{\cal T}} \nc{\cU}{{\cal U}}
	\nc{\cV}{{\cal V}} \nc{\cW}{{\cal W}} \nc{\cX}{{\cal X}}
	\nc{\cZ}{{\cal Z}}
	\nc{\hA}{{\hat{A}}} \nc{\hB}{{\hat{B}}} \nc{\hC}{{\hat{C}}}
	\nc{\hD}{{\hat{D}}} \nc{\hE}{{\hat{E}}} \nc{\hF}{{\hat{F}}}
	\nc{\hG}{{\hat{G}}} \nc{\hH}{{\hat{H}}} \nc{\hI}{{\hat{I}}}
	\nc{\hJ}{{\hat{J}}} \nc{\hK}{{\hat{K}}} \nc{\hL}{{\hat{L}}}
	\nc{\hM}{{\hat{M}}} \nc{\hN}{{\hat{N}}} \nc{\hO}{{\hat{O}}}
	\nc{\hP}{{\hat{P}}} \nc{\hR}{{\hat{R}}} \nc{\hS}{{\hat{S}}}
	\nc{\hT}{{\hat{T}}} \nc{\hU}{{\hat{U}}} \nc{\hV}{{\hat{V}}}
	\nc{\hW}{{\hat{W}}} \nc{\hX}{{\hat{X}}} \nc{\hZ}{{\hat{Z}}}
	\nc{\hn}{{\hat{n}}}
	\def\max{\mathop{\rm max}}
	\def\rank{\mathop{\rm rank}}
	\def\supp{\mathop{\rm supp}}
	\def \qed {\hfill \vrule height7pt width 7pt depth 0pt}
	\newcounter{lastnote}
\begin{document}
	\title{Extremal Maximal Entanglement}

\author{Wanchen Zhang}
\affiliation{School of Mathematical Sciences,
	University of Science and Technology of China, Hefei, 230026,  China}
 \affiliation{Hefei National Laboratory, University of Science and Technology of China, Hefei, 230088, China}

\author{Yu Ning }
\affiliation{Hefei National Laboratory, Hefei, 230088, China}

\author{Fei Shi}
\affiliation{Department of Computer Science, School of Computing and Data Science, University of Hong Kong, Hong Kong, 999077, China}	
	
\author{Xiande Zhang}
\email[]{Corresponding author: drzhangx@ustc.edu.cn}
\affiliation{School of Mathematical Sciences,
	University of Science and Technology of China, Hefei, 230026,  China}
 \affiliation{Hefei National Laboratory, University of Science and Technology of China, Hefei, 230088, China}

\begin{abstract}
A pure multipartite quantum state is called absolutely maximally entangled
 if all reductions of no more than half of the parties are maximally mixed.
However, an $n$-qubit
absolutely maximally entangled state only exists when $n$ equals $2$, $3$, $5$, and $6$.
A natural question arises when it does not exist: which $n$-qubit pure state has the largest number of maximally mixed $\floor{n/2}$-party reductions? Denote this number by $\Qex(n)$. It was shown that $\Qex(4)=4$ in [Higuchi \emph{et al.}  \href{https://www.sciencedirect.com/science/article/pii/S0375960100004801}{Phys. Lett. A  (2000)}
] and $\Qex(7)=32$ in [Huber \emph{et al.} \href{https://journals.aps.org/prl/abstract/10.1103/PhysRevLett.118.200502}{Phys. Rev. Lett. (2017)}
]. In this paper, we give a general upper bound of $\Qex(n)$ by linking the
well-known Tur\'an's problem in graph theory, and provide lower bounds by constructive and
probabilistic methods.
In particular, we show that $\Qex(8)=56$, which is the third known value for this problem.
\end{abstract}
\maketitle
\vspace{-0.5cm}

\section{Introduction}\label{sec:int}
Multipartite entangled states have applications in various
quantum information tasks, such as quantum teleportation and quantum error correction
\cite{PhysRevA.87.012319,PhysRevA.69.052330}.
Therefore, the study of the entanglement properties of such states has recently become a field of intense research
\cite{RevModPhys.80.517,Borras_2007,PhysRevA.87.012319,
	PhysRevA.77.060304,PhysRevA.91.042339,PhysRevA.86.052335,PhysRevA.92.032316,
	Brown_2005,PhysRevLett.118.200502}.
For a pure quantum state with multiple parties, the maximal entanglement exists
between a bipartition if the reduction to the smaller part is
maximally mixed.  Multipartite  states that exhibit maximal
entanglement across all possible bipartitions are known as
\emph{absolutely maximally entangled} (AME) states \cite{PhysRevA.86.052335}.
For a fixed number of parties,
AME states always exist when the local dimension is large enough \cite{Feng2017}.
However, when the number of parties is large enough, AME states do not
exist for a fixed local dimension
\cite{PhysRevA.69.052330}.
Taking qubit states as an example,
AME states do not exist when the number of
parties is $4$ or greater than $6$
\cite{HIGUCHI2000213,PhysRevA.69.052330,PhysRevLett.118.200502,rains1998quantum,rains1999quantum,nebe2006self}.
So a natural question arises: When an AME state does not exist,
which state can take the place of an AME state in the  quantum information
tasks above? 

As AME states share the full number of bipartitions where maximal entanglement lives, a pure state with the largest possible number of maximally mixed half-body reductions would be a good candidate. Such states have been studied over the past two decades when one met the nonexistence of AME qubit states. Higuchi and Sudbery \cite{HIGUCHI2000213} demonstrated that, for a $4$-qubit state, at most four out of ${\binom{4}{2} =6}$ two-party reductions can be maximally mixed, provided that all one-party reductions are
maximally mixed. For the $7$-qubit case, Huber \emph{et al.} \cite{PhysRevLett.118.200502} showed that up to 32 three-party reductions can be maximally mixed, given that all two-party reductions are maximally mixed.

Pure states with the largest  number of maximally mixed half-body reductions are worth studying for another reason. There are many ways to measure multipartite entanglement \cite{PhysRevA.61.052306,PhysRevA.63.044301,PhysRevA.74.022314,PhysRevA.69.052330}, where different measures are often inconsistent because they employ
different strategies, focus on different aspects, and capture different
features of this quantum phenomenon. However, despite different entanglement measures,  the possible maximal entanglement for any bipartition is achieved only when the reduction is maximally mixed. When AME states do not exist, one common way is to look for pure states that maximize the average entanglement among all bipartitions \cite{Zha_2012,HIGUCHI2000213,Brown_2005,Borras_2007,Zha_2011}. Another way is to look for pure states that achieve maximal entanglement between as many bipartitons as possible, that is,  pure states that are close to AME states from a discrete point of view \cite{HIGUCHI2000213,PhysRevLett.118.200502}.

In this paper, we study the largest number of maximally mixed half-body reductions in an arbitrary qubit pure state.  We connect this number with the well-known Tur\'an's number in graph theory and
establish an upper bound on the largest number of maximally mixed
half-body reductions that one pure state can have. Based on this upper bound, we show that in a pure state of eight qubits, at most $56$ many four-party reductions can be maximally mixed. Such a state can be constructed from orthogonal arrays  and graphs \cite{PhysRevA.99.042332,sudevan2022n}. This is another nontrivial extremal case besides the $4$-qubit and $7$-qubit states.  General lower bounds are also given by explicitly constructing graph states or in a probabilistic way. Comparing with the existing results under the average linear entropy, it is interesting to find that the quantum state with the largest number of maximally mixed half-body reductions also has the largest average linear entropy among $4$, $7$, and $8$-qubit pure states \cite{PhysRevA.69.052330,PhysRevA.77.060304}.

The rest of this paper is organized as follows. In \autoref{sec:pre}, the preliminary knowledge is introduced. In \autoref{sec:upperbound}, we give an upper bound on the number of maximal mixed reductions through
an extremal problem in combinatorics.
 In \autoref{sec:graphstate}, we give a general lower
bound by constructing good graph states and a lower bound by the probabilistic method.
In particular, the maximum number of maximally mixed half-body reductions
that an $8$-qubit pure state may possess is determined, and
examples of $8$-qubit pure states reaching this upper bound are constructed.
In \autoref{sec:com}, we examine some examples of pure states having
the largest number of maximally mixed $\floor{n/2}$-party reductions.
These examples also tend to share the largest average linear entropy.
Finally, we conclude in \autoref{sec:con}.

\section{Preliminaries and Problem statement}\label{sec:pre}
Let $[n]:= \{ 1,\ldots, n \}$ and let $\binom{[n]}{k}$ denote subsets of size $k$ of $[n]$.
\subsection{Problem statement}
First, we give the definition of $k$-uniform states.
\begin{definition}
	A pure state $\ket{\psi} \in \qty(\CC^d)^{\otimes n}$  shared among $n$ parties
	in $[n]$ is said to be \emph{$k$-uniform}, where $k \le \floor{n/2}$ is
	a positive integer, if the reductions of $\ket{\psi}$ to any
$m$-party with $m \le k$ are maximally mixed, i.e.,
all reductions of $\ket{\psi}$ of size $m$ are the
same, namely $\frac{(I_d)^{\otimes m}}{d^m}$.

\end{definition}

The existence of $k$-uniform states has been widely studied \cite{Feng2017,PhysRevA.94.012346,PhysRevA.99.042332,Zang_2021,PhysRevA.104.032601,shi2023boundskuniformquantumstates},
while their existence is ensured when $n$ is large for a fixed $k$ and $d$ \cite{Feng2017}.
However, this is not the case when $k$ is related to $n$ \cite{shi2023boundskuniformquantumstates}.
Specifically, when $k=\lfloor\frac{n}{2}\rfloor$, an $\lfloor\frac{n}{2}\rfloor$-uniform
state in $(\bbC^{d})^{\otimes n}$, also known as an AME state and denoted by $\AME(n,d)$,
is very rare for the given local dimension $d$.
In the case of $d=2$, $\AME(n,2)$ exists only for $n = 2,3,5$, and $6$
\cite{rains1999quantum,PhysRevLett.118.200502,PhysRevA.69.052330}. $\AME(4,2)$ was proved  not to exist in \cite{HIGUCHI2000213},
	and $\AME(n,2)$ for $n \ge 8$  was proved  not to exist in
	\cite{rains1998quantum,rains1999quantum,nebe2006self,PhysRevA.69.052330}.
	The last
	case  $\AME(7,2)$ was proved not to exist by Huber
	\emph{et al.} in \cite{PhysRevLett.118.200502}, where the authors provided
	a method for characterizing qubit AME states and their approximations,
	making use of the Bloch representation \cite{RevModPhys.29.74}. The method in \cite{PhysRevLett.118.200502} will be recalled in the next subsection and will be applied later in our proofs.

Now, we introduce the terminologies for our problem on qubit states,  which can be easily generalized to qudit states.

For a pure state $\ket{\psi} \in \pqty{\CC^2}^{\otimes n}$  and
$k \in [n]$,
we denote by $\mathcal{M}_k(\ket{\psi})$ the set of $k$-party to which
the reductions of $\rho$ are maximally mixed,
where $\rho = \ketbra{\psi}{\psi}$.
Namely,
\[\mathcal{M}_k(\ket{\psi}) \triangleq \set{A \in \binom{[n]}{k} }{\rho_A = \frac{(I_2)^{\otimes k}}{2^k}, \rho = \ketbra{\psi}{\psi}}.\]
Denote  $m_k(\ket{\psi})$ the size of $\mathcal{M}_k(\ket{\psi})$.
We define the \emph{quantum extremal number}, denoted by $\Qex(n,k)$,
to be the maximum  $m_k(\ket{\psi})$ among all pure states $\psi \in
\pqty{\CC^2}^{\otimes n}$, i.e.,
$$
\Qex(n,k) \triangleq \max_{\ket{\psi} \in \pqty{\CC^2}^{\otimes n}}
m_k(\ket{\psi}).
$$
When $k=\floor{n/2}$, we write $\Qex(n)$ for short. By \cite{PhysRevLett.118.200502,HIGUCHI2000213}, we know $\Qex(4)=4$ and $\Qex(7)=32$.

There is a good reason for the terminology ``quantum extremal number'',
as $\Qex(n,k)$ will be proved later to be related to Tur\'an's extremal number in graph theory.
If $\ket{\psi} \in
\pqty{\CC^2}^{\otimes n}$ is a pure state reaching
the quantum extremal number, i.e., $m_k(\ket{\psi}) = \Qex(n,k)$ for some
$k \in \big[\floor{n/2}\big]$, then $\ket{\psi}$ is said to be an
\emph{$k$-extremal maximally entangled ($k$-EME) state}.
Note the fact that a pure state is $(k+1)$-EME does not mean that it is $k$-EME.
Furthermore, if
$\ket{\psi}$ is $k$-EME for all $k \in \big[\floor{n/2}\big]$, then
we call $\ket{\psi}$ a \emph{perfect extremal maximally entangled (PEME) state}.
Clearly, AME states are PEME states, since the values of $m_k$ achieve the trivial upper bound $\binom{n}{k}$ for all $k \in \big[\floor{n/2}\big]$.
Define \[\pi(n,k) \triangleq \frac{\Qex(n,k)}{\binom{n}{k}}\] as the density of maximally mixed $k$-party reductions.
If $k$ is a  fixed integer, then $\lim_{n \to \infty} \pi(n,k) = 1$
\cite{Feng2017}. However, when $k$ is a function of $n$, for example
$k = \Theta(n)$, this is no longer the case. The behavior of $\pi(n,k)$
will be studied in this paper for $k = n/2$ as $n$ goes to
infinity.

\subsection{The Bloch representation of $k$-uniform states}
In this subsection we briefly introduce the Bloch representation of quantum states and the parity rule lemma given in \cite{PhysRevLett.118.200502}.

Any $n$-qubit state can be written in terms of tensor products of Pauli matrices as
\begin{equation}\label{eqrho}
\rho=\sum_{\alpha_1, \ldots, \alpha_n}\frac{1}{2^n}r_{\alpha_1, \ldots,
\alpha_n}\sigma_{\alpha_1}\otimes\ldots\otimes\sigma_{\alpha_n}
\end{equation}
with
\begin{equation}\label{rrr}
r_{\alpha_1, \ldots,
\alpha_n}=\tr(\sigma_{\alpha_1}\otimes\ldots\otimes
\sigma_{\alpha_n}\times\rho),
\end{equation}
where
$\alpha_i \in \{0, x, y, z\}$, $\sigma_0=I_2=$ $\begin{pmatrix} 1 & 0 \\ 0 & 1 \end{pmatrix}$, $\sigma_x=$ $\begin{pmatrix} 0 & 1 \\ 1 & 0 \end{pmatrix}$,  $\sigma_y=$ $\begin{pmatrix} 0 & -i \\ i & 0 \end{pmatrix}$ and $\sigma_z=$ $\begin{pmatrix} 1 & 0 \\ 0 & -1 \end{pmatrix}$.
 For convenience, denote
$\sigma_{\alpha}:=\sigma_{\alpha_1}\otimes\ldots\otimes\sigma_{\alpha_n}$ with
$\alpha=(\alpha_1,\ldots,\alpha_n)\in \{0, x, y, z\}^n$.
Define the \emph{support} of $\sigma_{\alpha}$ as
$\supp(\sigma_{\alpha})=\{i\mid \alpha_i\neq 0 \ \text{for} \ 1\leq i\leq n\}$,
and the weight of $\sigma_{\alpha}$ as $\wt(\sigma_{\alpha}) =
|\supp(\sigma_{\alpha})|$.
Let $P_j$ denote the sum of the terms $\sigma_{\alpha}$ with wt($\sigma_{\alpha})
= j $ in \autoref{eqrho}. Consequently, the state can be expressed as
\begin{equation}
	\rho=\frac{1}{2^n}(I_2^{\otimes n}+ \sum_{j=1}^{n} P_j).
\end{equation}
To be more specific, we denote by $P^{(\mathcal{J})}_j$ the partial sum in $P_j$ whose support is $\mathcal{J}\subset [n]$. For example, a state of four qubits
reads
\begin{equation}\label{eq5}
	\rho=\frac{1}{2^4}(I_2^{\otimes4}+\sum_{i=1}^{4}{P}^{(i)}_1 + \sum_{1\le j<k\le 4}P^{(jk)}_2 + \sum_{1\le l<p<q\le 4}P^{(lpq)}_3 + P_4),
\end{equation}
where, e.g., $P_2^{(12)}=\sum r_{\alpha_1, \alpha_2, 0, 0} \sigma_{\alpha_1}\otimes\sigma_{\alpha_2}\otimes I_2\otimes I_2$ and $\alpha_1, \alpha_2 \neq 0$.
%

When $\rho = \ketbra{\psi}{\psi}$ is a $k$-uniform state, the coefficients in the Bloch representation of terms with
weight $1, 2, \ldots , k$ are zero
\cite{PhysRevA.87.012319,sudevan2022n}, that is, $P_1=P_2=\dots=P_{k}=0$.
Another important property of $\rho$ follows from the Schmidt decomposition: the complementary reductions of any bipartition share the same
spectrum. Since a reduction to $\mathcal{J}\subset [n]$ of $\rho$ with  $\card{\mathcal{J}}=l\le k$  is maximally mixed, its
complementary reduction $\rho_{\bar{\mathcal{J}}}$ with $\bar{\mathcal{J}}: =[n]\setminus \mathcal{J}$
 of size
$n-l \ge \lfloor\frac{n}{2}\rfloor$
has all $2^{l}$
nonzero eigenvalues equal to $\lambda=2^{-l}$.
As analyzed in \cite{PhysRevLett.118.200502}, we have
\begin{equation}\label{eqrhoA2}
	\rho_{\bar{\mathcal{J}}}^2=2^{-l}\rho_{\bar{\mathcal{J}}}
\end{equation}
and
\begin{equation}\label{eqschdec}
	\rho_{\bar{\mathcal{J}}}\otimes I_2^{\otimes l}\ket{\psi}=2^{-l}\ket{\psi}.
\end{equation}


Finally, we restate the parity rule lemma from \cite{PhysRevLett.118.200502}, which will play a key role when recognizing what terms $P_i$ may appear in $\rho^2$ in the Bloch representation.

\begin{lemma}[parity rule \cite{PhysRevLett.118.200502}]\label{rule}
	Let $M$, $N$ be Hermitian 
  operators
	proportional to $n$-fold tensor
	products of single-qubit Pauli operators, $M =
	c_M\sigma_{\alpha_{\mu_1}}\otimes\cdots\otimes\sigma_{\alpha_{\mu_n}}$,
	$N=c_N\sigma_{\alpha_{\nu_1}}\otimes\cdots\otimes\sigma_{\alpha_{\nu_n}}$,
	where $c_M, c_N \in \mathbb{R}$. Then, if the anticommutator
	$\{M, N\} := MN + NM$ of $M$ and $N$
	does not vanish, its weight fulfills
	\begin{equation}
	\wt(\acomm{M}{N}) \equiv \wt(M) + \wt(N) \pmod{2}.
	\end{equation}
\end{lemma}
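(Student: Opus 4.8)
The plan is to reduce everything to a site-by-site analysis of the two Pauli strings, using only the elementary commutation relations of single-qubit Pauli matrices. First I would discard the real scalars $c_M, c_N$, since they affect neither the weight nor the vanishing of the anticommutator, and write $M = c_M \bigotimes_{j=1}^n \sigma_{\mu_j}$ and $N = c_N \bigotimes_{j=1}^n \sigma_{\nu_j}$. At each site $j$ the two single-qubit factors either commute or anticommute: they anticommute precisely when $\sigma_{\mu_j}$ and $\sigma_{\nu_j}$ are both different from $I_2$ and distinct from each other, and they commute in every other case (at least one is the identity, or the two are equal). Letting $s$ denote the number of sites at which they anticommute and tensoring the sign contributions, one gets $NM = (-1)^s MN$, hence $\acomm{M}{N} = \bigl(1 + (-1)^s\bigr)\,MN$.

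From this identity the nonvanishing hypothesis becomes a parity condition: $\acomm{M}{N} \neq 0$ forces $s$ to be even, and then $\acomm{M}{N} = 2MN$, which is a nonzero scalar multiple of a single Pauli string (the scalar is $2 c_M c_N$ times the product of the $\pm i$ phases coming from the $s$ anticommuting sites; since $s$ is even this phase is real, consistent with $\acomm{M}{N}$ being Hermitian). Because weight is invariant under multiplication by a nonzero scalar, $\wt(\acomm{M}{N}) = \wt(MN)$, so it remains only to compare $\wt(MN)$ with $\wt(M)+\wt(N)$ modulo $2$.

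Next I would classify the sites into four types according to which of $\sigma_{\mu_j}, \sigma_{\nu_j}$ equal $I_2$: only $M$ nontrivial ($a$ sites), only $N$ nontrivial ($b$ sites), both nontrivial and equal ($c$ sites), and both nontrivial and distinct ($d$ sites). At a site of the last type the single-qubit product is a nonidentity Pauli carrying a $\pm i$ phase, so these $d$ sites are exactly the anticommuting ones, i.e. $d = s$. Reading off the weights gives $\wt(M) = a+c+d$, $\wt(N) = b+c+d$, and $\wt(MN) = a+b+d$, since a site of the ``both equal'' type contributes $\sigma_{\mu_j}^2 = I_2$ and so does not add to the product weight. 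Hence $\wt(M)+\wt(N) - \wt(MN) = 2c + d \equiv d \pmod 2$.

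Combining the two ingredients closes the argument: the hypothesis $\acomm{M}{N}\neq 0$ makes $s = d$ even, so $\wt(M)+\wt(N) \equiv \wt(MN) = \wt(\acomm{M}{N}) \pmod 2$, which is the claim. The only point needing a little care — and the place I would watch most closely — is the bookkeeping that identifies the anticommuting sites with the ``both nontrivial and distinct'' sites and confirms that these are precisely the sites where the single-qubit product picks up a $\pm i$ phase; everything else is routine counting. No deeper machinery is required, since the statement is ultimately an expression of the $\mathbb{F}_2$-symplectic structure of Pauli strings.
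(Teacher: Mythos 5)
Your proof is correct. Note that the paper itself does not prove this lemma at all: it is restated verbatim from Huber \emph{et al.}~\cite{PhysRevLett.118.200502} and used as a black box, so there is no internal proof to compare against; your argument is essentially the standard one from that reference. The site-by-site bookkeeping is sound: the identity $NM = (-1)^s MN$ with $s$ the number of anticommuting sites correctly forces $s$ even whenever $\acomm{M}{N}\neq 0$ (Pauli strings are invertible and the hypothesis rules out $c_Mc_N=0$), and the four-type classification giving $\wt(M)+\wt(N)-\wt(MN) = 2c+d$ with $d=s$ closes the parity claim. The one step you flagged as delicate --- that the anticommuting sites are exactly those where both factors are nontrivial and distinct, and that exactly these sites contribute a nonidentity Pauli with a $\pm i$ phase to the product --- is indeed the crux, and your treatment of it is right.
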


\subsection{Hypergraphs and Tur\'an's extremal number}
Now we introduce related concepts in hypergraphs.
A \emph{hypergraph} ${H}$ is a pair
$(V,E)$, where $V$ is a set of elements called \emph{vertices},  and
$E$ is a set of subsets of $V$ called \emph{hyperedges}. If
every hyperedge in $E$ has the same size $k$, then
${H}$ is called \emph{$k$-uniform}. A
$2$-uniform hypergraph is just a simple graph.
Let ${H}_1=(V_1,E_1)$ be another $k$-uniform hypergraph.
If $V \subset V_1$ and $E \subset E_1$, we say ${H}$ is a \emph{sub-hypergraph} of
${H}_1$.  If ${H}_1$ contains no copy of ${H}$ as a sub-hypergraph,
we say ${H}_1$ is ${H}$-\emph{free}.

The Tur\'an's  extremal number concerns how many edges
${H}_1$ may possess, provided that ${H}_1$
is  ${H}$-free.
More precisely: let $2 \le k \le n$ be fixed positive integers and
${H}$ be a fixed $k$-uniform hypergraph, the \emph{Tur\'an's extremal number}
is defined as
$$
\ex_k\pqty{n,{H}} \triangleq \max
\set{ |E_1|}{{H}_1 = (V_1,E_1) \text{ is }
	n\text{-vertex, } k\text{-uniform and } {H}\text{-free}}.
$$

The $k$-uniform $l$-vertex complete hypergraph, denoted as $K_l^k$,
is the hypergraph with vertex set $[l]$ and edge set $\binom{[l]}{k}$.
The extremal number $\ex_k\pqty{n,K_l^k}$
for $K_{l}^k$ can be interpreted in another way.
For $k \le l \le n$, define $T(n,l,k)$ to be the smallest number of $k$-subsets of an $n$-set $X$, such that every $l$-subset of $X$ contains at least
one of the $k$-subsets. Considering these $T(n,l,k)$ many  $k$-subsets as hyperedges,  we can see that $\ex_k\pqty{n,K_l^k} = \binom{n}{k} - T(n,l,k)$.
Moreover, we have the following bound on $T(n,l,k)$.
\begin{proposition}[\cite{keevash2011hypergraph,de1983extension}]
	\label{turannumber222}
	For all positive integers $k \le l \le n$,
	$T (n, l, k) \ge \frac{n-l+1}{n-k+1}\binom{n}{k}/\binom{l-1}{k-1}$.
\end{proposition}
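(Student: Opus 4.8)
The plan is to recognize this inequality as de Caen's lower bound for Tur\'an systems, which generalizes Tur\'an's theorem, and to prove it by a double-counting argument driven by convexity. First I would rephrase the covering condition in the complementary language already set up in the excerpt: a family $\mathcal F\subseteq\binom{[n]}{k}$ meets every $l$-subset of $[n]$ if and only if the $k$-uniform hypergraph $\binom{[n]}{k}\setminus\mathcal F$ is $K_l^k$-free, so that $T(n,l,k)=\binom{n}{k}-\ex_k(n,K_l^k)$ and the claim is \emph{exactly} an upper bound on $\ex_k(n,K_l^k)$. As a sanity check I would test $k=2$, where the statement collapses to $\ex_2(n,K_l^2)\le(1-\tfrac{1}{l-1})\tfrac{n^2}{2}$, the asymptotic form of Tur\'an's theorem; this tells me the general-$k$ bound should arise from the $k$-uniform analogue of a degree-counting proof of Tur\'an's theorem.

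Second, I would reduce the statement to a transparent inequality on average degrees of $(k-1)$-sets. For an optimal covering family $\mathcal F$ with $|\mathcal F|=T(n,l,k)=:t$, assign to each $(k-1)$-set $\sigma\subset[n]$ the degree $d(\sigma):=\#\{v\notin\sigma:\sigma\cup\{v\}\in\mathcal F\}$. Counting incidences between members of $\mathcal F$ and their $(k-1)$-subsets gives $\sum_{\sigma}d(\sigma)=k\,t$, so the average over the $\binom{n}{k-1}$ sets $\sigma$ is $\bar d=kt/\binom{n}{k-1}$. Since $\binom{n}{k}/\binom{n}{k-1}=(n-k+1)/k$, this yields the clean identity $(n-k+1)\,t=\binom{n}{k}\,\bar d$, and substituting it shows the claimed bound is \emph{equivalent} to
\[
\bar d\ \ge\ \frac{n-l+1}{\binom{l-1}{k-1}} ,
\]
i.e. to the assertion that on average a $(k-1)$-set lies in at least $(n-l+1)/\binom{l-1}{k-1}$ members of the covering family.

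Third, I would establish this average-degree bound — the heart of the proof — by double counting the flags $(\sigma,L)$ with $\sigma\in\binom{[n]}{k-1}$, $L\in\binom{[n]}{l}$, and $\sigma\subset L$, organized through the link of each $(k-1)$-set. For fixed $\sigma$, the number of $l$-sets $L\supseteq\sigma$ containing \emph{no} member of $\mathcal F$ of the form $\sigma\cup\{v\}$ equals $\binom{n-k+1-d(\sigma)}{l-k+1}$; on the other hand, because $\mathcal F$ meets every $l$-set, each $L$ forces a bounded number of such "edge-avoiding'' flags, giving an upper bound on their total. Since $x\mapsto\binom{n-k+1-x}{l-k+1}$ is a convex (and decreasing) function of the degree — a convex function composed with an affine map — Jensen's inequality applied to the degree sequence $\{d(\sigma)\}$ converts this upper bound into the desired lower bound on $\bar d$. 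I expect the main obstacle to be precisely this convexity step: the hypothesis ``every $l$-set is covered'' is global — a single $K_l^k$ involves all $\binom{l}{k}$ of its $k$-subsets and cannot be localized to one link — so the delicate point is to arrange the count so that this global constraint is turned, with no loss in the leading constant, into the exact factor $(n-l+1)/\binom{l-1}{k-1}$; separately checking the boundary regime $n<l+k-1$ (where the bound is weaker than the trivial $\binom{n}{k}/\binom{l}{k}$) is only a minor verification.
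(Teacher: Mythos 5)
The paper offers no proof of this proposition at all: it is imported from de Caen (1983) via Keevash's survey, so your attempt has to stand on its own merits against the known proof in those references. Your first two steps are correct and cleanly done: the complementation $T(n,l,k)=\binom{n}{k}-\ex_k(n,K_l^k)$ is exactly the paper's remark, the identity $(n-k+1)\,t=\binom{n}{k}\,\bar d$ is right, and the proposition is indeed \emph{equivalent} to the average-degree bound $\bar d\ge (n-l+1)\big/\binom{l-1}{k-1}$; the $k=2$ sanity check is also accurate. Likewise your per-link count $\binom{n-k+1-d(\sigma)}{l-k+1}$ is exact, and the convexity you worry about is not actually a problem.

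The genuine gap is the third step, which is the whole theorem, and it is not a ``delicate point'' that can be deferred: the scheme you describe provably cannot reach the stated constant. The only per-$L$ information the covering hypothesis gives in your setup is that the $k$ many $(k-1)$-subsets of one covering edge inside $L$ are non-avoiding, i.e.\ at most $\binom{l}{k-1}-k$ avoiding flags per $l$-set $L$. Run this in the simplest case $k=2$, $l=3$: the count becomes $\sum_{\sigma}\binom{n-1-d(\sigma)}{2}\le\binom{n}{3}$, and Jensen turns it into $(n-1-\bar d)(n-2-\bar d)\le (n-1)(n-2)/3$, which only forces $\bar d\gtrsim\bigl(1-1/\sqrt{3}\bigr)n\approx 0.42\,n$, strictly short of the required $(n-2)/2$. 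The failure is structural, not a matter of optimizing constants: in the extremal configuration (two disjoint cliques $K_{n/2}$, the complement of the Tur\'an graph), triangles inside one clique contain \emph{zero} avoiding vertices while split triangles contain exactly one, so any uniform per-$L$ bound, summed over all $L$, overshoots the true flag count by $\Theta(n^3)$ precisely on the example where your final inequality would need to be tight. Hence the covering constraint must be exploited globally, not link by link; that missing global argument is de Caen's actual proof, which (as the title of the cited paper indicates) extends the Moon--Moser recursion on counts of complete subgraphs: one derives a recursion relating the numbers of $K_s^k$'s and $K_{s+1}^k$'s in the complementary $K_l^k$-free hypergraph and telescopes it from $s=l$ (where the count is zero) down to $s=k$. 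As written, your proposal establishes only a weaker bound with a worse leading constant, not the proposition.
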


By \autoref{turannumber222}, $\ex_k\pqty{n,K_l^k}\leq \binom{n}{k} -\frac{n-l+1}{n-k+1}\binom{n}{k}/\binom{l-1}{k-1}$.

\section{Connections of quantum and Tur\'{a}n's extremal numbers} \label{sec:upperbound}

Huber \emph{et al.} \cite{PhysRevLett.118.200502} showed that if $\ket{\psi}$
is a
pure state of seven qubits with all $2$-reductions of $\ketbra{\psi}$
maximally mixed, then the number of maximally mixed $3$-reductions of
$\ketbra{\psi}$ is at most $32$. In this section, by generalizing the
results of \cite{PhysRevLett.118.200502}, we
provide  upper bounds for the quantum extermal number in terms of
Tur\'an's extremal number.

First, we associate each
pure state  with a uniform hypergraph. Let $\ket{\psi} \in (\CC^2)^{\otimes n}$ be a
pure state of $n$ qubits, and $\rho = \ketbra{\psi}$. For $k \in [n]$,
we defined a $k$-uniform hypergraph $G_k(\ket{\psi})$ as follows:
the vertex set is $[n]$, i.e., each party of
$\ket{\psi}$ corresponds to a vertex of $G_k(\ket{\psi})$;
for any $k$-subset $\mathcal{A} \subset [n]$, $\mathcal{A}$ is
an edge of $G_k(\ket{\psi})$ if and only if $\rho_\mathcal{A}$ is
maximally mixed. Under these notations, several results in \cite{PhysRevLett.118.200502} can reformulated as follows.

\begin{lemma}[
Cases 1 and 2 of Appendix B in \cite{PhysRevLett.118.200502}]\label{nonevenmod42}
Let $\ket{\psi}$ be an $n$-qubit pure state, where $n=2k$, $k \geq 2$ and $k\neq 3$.
For any $\mathcal{A} \subset [n]$ with
$\card{\mathcal{A}} = k+2$, there exists $\mathcal{B} \subset \mathcal{A}$
with $\card{\mathcal{B}} = k$ such
	that the reduction of $\ket{\psi}$ to $\mathcal{B}$ is not maximally mixed.

\end{lemma}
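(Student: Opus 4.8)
The plan is to argue by contradiction: suppose $\mathcal{A}\subset[n]$ has $\card{\mathcal{A}}=k+2$ and every $k$-subset $\mathcal{B}\subset\mathcal{A}$ has $\rho_{\mathcal{B}}$ maximally mixed, and then derive a contradiction for $k\ge 4$. The remaining cases are special: $k=2$ is exactly the nonexistence of $\AME(4,2)$ (Higuchi--Sudbery), and $k=3$ is excluded by hypothesis because $\AME(6,2)$ exists. Since $n=2k$, maximal mixedness of a $k$-set is closed under complementation (the complement shares the same $2^k$ nonzero eigenvalues $2^{-k}$), and it is trivially closed under passing to subsets. Combining these, I would first record that $\bar{\mathcal{A}}$ (of size $k-2$) and, for every $(k+1)$-subset $\mathcal{C}\subset\mathcal{A}$, the set $\bar{\mathcal{C}}$ (of size $k-1$) are maximally mixed, because each is contained in a maximally mixed $k$-set of the form $\bar{\mathcal{A}}\cup\{i,j\}=\overline{\mathcal{A}\setminus\{i,j\}}$ with $i,j\in\mathcal{A}$.

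Using the Schmidt-spectrum property \eqref{eqrhoA2}, this forces $\rho_{\mathcal{A}}$ and every $\rho_{\mathcal{C}}$ to be scaled projectors: $\rho_{\mathcal{A}}^2=2^{2-k}\rho_{\mathcal{A}}$ and $\rho_{\mathcal{C}}^2=2^{1-k}\rho_{\mathcal{C}}$. Next I would write down their Bloch representations. Because every support of size $\le k$ inside $\mathcal{A}$ lies in some maximally mixed $k$-subset, all Bloch terms of weight $1,\dots,k$ vanish, so $\rho_{\mathcal{A}}=\frac{1}{2^{k+2}}(I+Q_{k+1}+Q_{k+2})$ carries only weight-$(k+1)$ and weight-$(k+2)$ terms, while $\rho_{\mathcal{C}}=\frac{1}{2^{k+1}}(I+Q^{(\mathcal{C})})$ carries only the weight-$(k+1)$ part supported exactly on $\mathcal{C}$.

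The key step is to read off the weight-$0$ (identity) component of each projector equation. Substituting the Bloch forms into $\rho_{\mathcal{A}}^2=2^{2-k}\rho_{\mathcal{A}}$ gives the operator identity $(Q_{k+1}+Q_{k+2})^2=15\,I+14\,(Q_{k+1}+Q_{k+2})$, whose identity part yields $S_{k+1}+S_{k+2}=15$, where $S_j$ denotes the sum of the squares of the weight-$j$ Bloch coefficients supported in $\mathcal{A}$. Similarly $\rho_{\mathcal{C}}^2=2^{1-k}\rho_{\mathcal{C}}$ gives $(Q^{(\mathcal{C})})^2=3\,I+2\,Q^{(\mathcal{C})}$, whose identity part yields $S^{(\mathcal{C})}=3$ for each of the $k+2$ subsets $\mathcal{C}$, where $S^{(\mathcal{C})}$ is the sum of squares of the coefficients supported on $\mathcal{C}$. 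Since every weight-$(k+1)$ term has a unique $(k+1)$-subset as its support, summing over $\mathcal{C}$ gives $S_{k+1}=\sum_{\mathcal{C}}S^{(\mathcal{C})}=3(k+2)$, hence $S_{k+2}=15-3(k+2)=9-3k$. For $k\ge 4$ this is negative, contradicting $S_{k+2}\ge 0$.

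I expect the main obstacle to be the bookkeeping of the reductions: one must justify that maximal mixedness descends to all the relevant small sets (via complementation for $n=2k$ together with the subset property), correctly identify which Bloch weights survive in each reduced state, and pin down the constants $15=2^{4}-1$ and $3=2^{2}-1$, which come from the ranks $2^{k-2}$ and $2^{k-1}$ of the two families of projectors. Once the projector relations and their identity components are in place, the contradiction is the elementary positivity statement $S_{k+2}=9-3k\ge 0$, which fails precisely when $k\ge 4$; the borderline $k=3$ (where $S_{k+2}=0$) is exactly the surviving instance $\AME(6,2)$, which explains the hypothesis $k\neq 3$.
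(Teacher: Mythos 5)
Your proposal is correct, but it takes a genuinely different route from the paper's own proof (Appendix~\ref{appendix1}). The paper combines two ingredients you never use: the Schmidt--decomposition eigenvector equations \autoref{eqschdec}, which give $P_{k+1}^{(\bar{j})}\otimes I_2^{\otimes(k-1)}\ket{\psi}=3\ket{\psi}$ and $P_{k+2}\otimes I_2^{\otimes(k-2)}\ket{\psi}=3(3-k)\ket{\psi}$, and the parity rule \autoref{rule}, which lets one collect the odd-weight terms of the projector identity $(I_2+M+P_{k+2})^2=16(I_2+M+P_{k+2})$, where $M=\sum_{j=1}^{k+2}P_{k+1}^{(\bar{j})}\otimes I_2^{(j)}$, into an anticommutator relation ($\acomm{M}{P_{k+2}}=14P_{k+2}$ for odd $k$, $\acomm{M}{P_{k+2}}=14M$ for even $k$); evaluating both sides on $\ket{\psi}$ gives $18(k+2)(3-k)=42(3-k)$, respectively $18(3-k)=42$, which is impossible for all $k\ge 2$, $k\neq 3$. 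You instead extract only the weight-zero component of the same projector identities --- equivalently, you compare the purities $\tr \rho_{\mathcal{A}}^2$ and $\tr \rho_{\mathcal{C}}^2$ with their forced values $2^{2-k}$ and $2^{1-k}$ --- obtaining $S_{k+1}+S_{k+2}=15$ and $S^{(\mathcal{C})}=3$, and the support partition $S_{k+1}=\sum_{\mathcal{C}}S^{(\mathcal{C})}=3(k+2)$ then forces $S_{k+2}=9-3k<0$ for $k\ge 4$. This is more elementary: no parity rule and no eigenvector equations are needed, and every step (vanishing of low weights, identity-coefficient matching, the partition of weight-$(k+1)$ supports) checks out. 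The two arguments are quantitatively linked: your $S_{k+2}$ equals $\bra{\psi}(P_{k+2}\otimes I_2^{\otimes(k-2)})\ket{\psi}$, which is exactly the paper's eigenvalue $3(3-k)=9-3k$; you stop at positivity of a sum of squares, while the paper pushes on to the anticommutator. The price of stopping early is precisely the case $k=2$: there your computation gives $S_{4}=3\ge 0$, no contradiction, so you must import the Higuchi--Sudbery nonexistence of $\AME(4,2)$ \cite{HIGUCHI2000213} as an external citation, whereas the paper's anticommutator argument disposes of $k=2$ uniformly (in effect reproving Higuchi--Sudbery within the same framework). Since the paper relies on \cite{HIGUCHI2000213} elsewhere anyway, your citation is legitimate, but your proof is correspondingly less self-contained on that one case.
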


 The idea for
	the proof of \autoref{nonevenmod42} originates from
	\cite{PhysRevLett.118.200502}. For completeness, we include a
	detailed proof of \autoref{nonevenmod42} in Appendix \ref{appendix1}. By \autoref{nonevenmod42},   if
	$\ket{\psi}$ is a pure state of $2k$ qubits, then
	$G_{k}(\ket{\psi})$ is $K_{k+2}^{k}$-free.
	Thus,
	\begin{equation}\label{eq4m2}
		\Qex (2k) \le \ex_{k} (2k,K_{k+2}^{k}).
	\end{equation}
	


Combining \autoref{eq4m2}  and \autoref{turannumber222}, we have the following result for any even $n=2k$.

\begin{corollary} For any $k \geq 2$ and $k\neq 3$,
\begin{equation}\label{eq2k}
  \Qex(2k) \le \ex_{k} (2k,K_{k+2}^{k}) \le \binom{2k}{k} -
	\frac{k-1}{k+1}\binom{k+1}{k-1}^{-1}\binom{2k}{k}.
\end{equation}

\end{corollary}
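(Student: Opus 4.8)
The plan is to obtain this statement as an immediate corollary by chaining the left inequality already recorded in \autoref{eq4m2} with the Tur\'an-type lower bound of \autoref{turannumber222}. The left inequality $\Qex(2k) \le \ex_k(2k,K_{k+2}^k)$ is precisely \autoref{eq4m2}, which in turn rests on \autoref{nonevenmod42}: that lemma forces $G_k(\ket{\psi})$ to be $K_{k+2}^k$-free for every $n$-qubit pure state with $n=2k$, $k\neq 3$, so its number of edges---which equals $m_k(\ket{\psi})$---cannot exceed the extremal number $\ex_k(2k,K_{k+2}^k)$. Maximizing over $\ket{\psi}$ gives the first inequality. Thus only the right inequality requires work, and that work is purely a matter of substituting the correct parameters.

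For the right inequality, I would first invoke the identity $\ex_k(n,K_l^k)=\binom{n}{k}-T(n,l,k)$ established just above \autoref{turannumber222}, specialized to $n=2k$ and $l=k+2$, giving
\[
  \ex_k(2k,K_{k+2}^k)=\binom{2k}{k}-T(2k,k+2,k).
\]
It then suffices to bound $T(2k,k+2,k)$ from below and plug in. Applying \autoref{turannumber222} with $n=2k$, $l=k+2$, and using the parameter identities $n-l+1=k-1$, $n-k+1=k+1$, and $\binom{l-1}{k-1}=\binom{k+1}{k-1}$, yields
\[
  T(2k,k+2,k)\ge \frac{k-1}{k+1}\binom{k+1}{k-1}^{-1}\binom{2k}{k}.
\]
Substituting this lower bound into the displayed identity immediately produces the claimed upper bound $\binom{2k}{k}-\frac{k-1}{k+1}\binom{k+1}{k-1}^{-1}\binom{2k}{k}$, and combining with \autoref{eq4m2} completes the chain.

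There is no substantive obstacle here: all of the difficulty has already been absorbed into \autoref{nonevenmod42} (the parity-rule argument applied to the Bloch representation) and \autoref{turannumber222} (the hypergraph covering bound). The only point demanding any care is the bookkeeping of the parameter substitution---in particular, confirming that the hypotheses $k\le l\le n$ of \autoref{turannumber222} hold, i.e.\ that $k\le k+2\le 2k$, which is valid exactly for $k\ge 2$, matching the stated range (while the exclusion $k\neq 3$ is inherited from \autoref{nonevenmod42}). Accordingly, I would present the corollary as a short computation rather than as a genuinely new argument.
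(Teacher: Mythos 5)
Your proposal is correct and follows exactly the paper's own route: the corollary is obtained by combining \autoref{eq4m2} (the $K_{k+2}^{k}$-freeness of $G_k(\ket{\psi})$ from \autoref{nonevenmod42}) with the covering bound of \autoref{turannumber222} via the identity $\ex_k(n,K_l^k)=\binom{n}{k}-T(n,l,k)$. Your parameter bookkeeping ($n-l+1=k-1$, $n-k+1=k+1$, $\binom{l-1}{k-1}=\binom{k+1}{k-1}$) and the verification of the hypothesis $k\le k+2\le 2k$ are exactly the substitutions the paper leaves implicit.
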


When $k=2$, \autoref{eq2k} gives $\Qex(4)\le 5$. However, this is not tight since $\Qex(4)=4$. When $k=4,5,6$, \autoref{eq2k} gives $\Qex(8)\le 65$, $\Qex(10)\le 240$, and $\Qex(12)\le 892$.

%

For the case where the number $n$ of parties is odd, the subgraph-free property of the corresponding hypergraph is a bit complicated. We combine the cases $n=4m+1$ and $4m+3$ in \cite{PhysRevLett.118.200502} into \autoref{thmodd}.
 For completeness, we include a
	detailed proof in Appendix \ref{appendix2}.
\begin{lemma}[Cases 3 and 4 of Appendix B in \cite{PhysRevLett.118.200502}]\label{thmodd}
  Let $\ket{\psi}$ be an $n$-qubit pure state with $n=2k+1$, $k \ge 3$ and $k\neq 5$.
For any $\mathcal{A} \subset [n]$ with
$\card{\mathcal{A}} = k+2$, there exists a $k$-subset
$\mathcal{B} \subset [n]$ with $|\mathcal{B}\cap \mathcal{A}|=1$ or $k$ such
that the reduction  of $\ket{\psi}$ to $\mathcal{B}$ is not maximally mixed.
\end{lemma}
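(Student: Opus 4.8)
The plan is to argue by contradiction. Suppose that for some $\mathcal{A}\in\binom{[n]}{k+2}$ \emph{every} $k$-subset $\mathcal{B}$ with $\card{\mathcal{B}\cap\mathcal{A}}\in\{1,k\}$ has a maximally mixed reduction, and derive a contradiction. Since $n=2k+1$ gives $\card{[n]\setminus\mathcal{A}}=k-1$, the two families are transparent: the sets with $\card{\mathcal{B}\cap\mathcal{A}}=k$ are precisely the $k$-subsets of $\mathcal{A}$, while those with $\card{\mathcal{B}\cap\mathcal{A}}=1$ are precisely $\bar{\mathcal{A}}\cup\{a\}$ for $a\in\mathcal{A}$, where $\bar{\mathcal{A}}:=[n]\setminus\mathcal{A}$.

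I would first extract the Bloch structure. Because every $k$-subset of $\mathcal{A}$ is maximally mixed and every subset of $\mathcal{A}$ of size at most $k$ extends to one, all Bloch terms of $\rho$ whose support is contained in $\mathcal{A}$ and has size at most $k$ vanish; hence $\rho_{\mathcal{A}}=2^{-(k+2)}\big(I_2^{\otimes(k+2)}+Q_{k+1}+Q_{k+2}\big)$, where $Q_{k+1}$ collects the weight-$(k+1)$ terms (each supported on some $\mathcal{A}\setminus\{a\}$) and $Q_{k+2}$ is the lone full-support weight-$(k+2)$ block. The second family supplies spectral data through \autoref{eqrhoA2}. First, $\bar{\mathcal{A}}$ is a subset of the maximally mixed $\bar{\mathcal{A}}\cup\{a\}$, hence itself maximally mixed, and as $\card{\bar{\mathcal{A}}}=k-1\le k$, \autoref{eqrhoA2} gives $\rho_{\mathcal{A}}^2=2^{-(k-1)}\rho_{\mathcal{A}}$. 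Second, applying \autoref{eqrhoA2} to the maximally mixed $\bar{\mathcal{A}}\cup\{a\}$ (size $k$) shows each complementary $\rho_{\mathcal{A}\setminus\{a\}}$ (size $k+1$) is a scaled projection; combined with the vanishing of its low-weight terms this forces its weight-$(k+1)$ block $R^{(a)}$ to satisfy $(R^{(a)})^2=I$, so $\lVert R^{(a)}\rVert^2=1$. Since the $R^{(a)}$ have pairwise distinct supports, $\lVert Q_{k+1}\rVert^2=\sum_{a\in\mathcal{A}}\lVert R^{(a)}\rVert^2=k+2$.

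The decisive step is then a norm count. Taking traces in $\rho_{\mathcal{A}}^2=2^{-(k-1)}\rho_{\mathcal{A}}$ gives $\lVert Q_{k+1}\rVert^2+\lVert Q_{k+2}\rVert^2=7$, whence $\lVert Q_{k+2}\rVert^2=5-k$. For every $k\ge 6$ this is negative, an immediate contradiction that settles the lemma. This is also exactly where the excluded value sits: at $k=5$ the count degenerates to $\lVert Q_{k+2}\rVert^2=0$ and yields nothing, so the present argument is silent there, which is what forces the hypothesis $k\neq 5$.

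Only the small cases $k\in\{3,4\}$ remain (where $5-k>0$), and these I expect to be the main obstacle. Here I would expand $\rho_{\mathcal{A}}^2=2^{-(k-1)}\rho_{\mathcal{A}}$ into
\[ Q_{k+1}^2+\{Q_{k+1},Q_{k+2}\}+Q_{k+2}^2=7\,I_2^{\otimes(k+2)}+6\,Q_{k+1}+6\,Q_{k+2}, \]
and invoke the parity rule (\autoref{rule}): $Q_{k+1}^2$ and $Q_{k+2}^2$ produce only even-weight Pauli terms, while $\{Q_{k+1},Q_{k+2}\}$ produces only odd-weight ones. Separating the two parities, and splitting into $n\equiv 1$ and $n\equiv 3 \pmod 4$ (that is, $k$ even or odd) so as to place $6Q_{k+1}$ and $6Q_{k+2}$ on the correct side, forces every intermediate-weight component of the left-hand side to vanish. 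The hard part is the bookkeeping showing that these vanishing conditions, sharpened by the involutions $(R^{(a)})^2=I$, cannot be met simultaneously; this is precisely the delicate case analysis carried out in \cite{PhysRevLett.118.200502}, which I would follow for the residual small cases.
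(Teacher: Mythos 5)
Your structural reductions are all correct, and your norm count is a genuinely different ingredient from anything in the paper: from the involutions $(R^{(a)})^2=I$ you get $\|Q_{k+1}\|^2=k+2$, the trace of the projector relation $\rho_{\mathcal{A}}^2=2^{-(k-1)}\rho_{\mathcal{A}}$ gives $\|Q_{k+1}\|^2+\|Q_{k+2}\|^2=7$, and hence $\|Q_{k+2}\|^2=5-k<0$ for $k\ge 6$ --- a one-line contradiction for all large $k$. (The paper never isolates this; your identity is in effect the trace of its operator equation.) You also correctly identify $k=5$ as the value where the count degenerates.

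The genuine gap is that the lemma claims every $k\ge 3$ with $k\ne 5$, and your argument proves nothing for $k=3$ and $k=4$. For these you only write down the parity-separated identity and then defer the remaining ``bookkeeping'' to \cite{PhysRevLett.118.200502}; since the statement under proof \emph{is} Cases 3 and 4 of Appendix B of that very reference, this is a citation rather than a proof. The idea you are missing --- the one the paper's proof in Appendix \ref{appendix2} actually uses, and which settles \emph{all} admissible $k$ at once, with no case analysis and no norm count --- is the Schmidt eigenvector relation \autoref{eqschdec}. Applied to the maximally mixed sets $\bar{\mathcal{A}}$ (size $k-1$) and $\bar{\mathcal{A}}\cup\{a\}$ (size $k$), it gives $\rho_{\mathcal{A}}\otimes I_2^{\otimes(k-1)}\ket{\psi}=2^{1-k}\ket{\psi}$ and $\rho_{\mathcal{A}\setminus\{a\}}\otimes I_2^{\otimes k}\ket{\psi}=2^{-k}\ket{\psi}$. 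Substituting the Bloch forms turns these into eigenvalue equations: $\bigl(R^{(a)}\otimes I_2^{(a)}\bigr)\otimes I_2^{\otimes(k-1)}\ket{\psi}=\ket{\psi}$ for each $a\in\mathcal{A}$, hence $Q_{k+1}\otimes I_2^{\otimes(k-1)}\ket{\psi}=(k+2)\ket{\psi}$ and $Q_{k+2}\otimes I_2^{\otimes(k-1)}\ket{\psi}=(5-k)\ket{\psi}$. Now hit your parity-separated identities with $\ket{\psi}$: for odd $k$, $\{Q_{k+1},Q_{k+2}\}=6Q_{k+2}$ forces $2(k+2)(5-k)=6(5-k)$, false for $k=3$; for even $k$, $\{Q_{k+1},Q_{k+2}\}=6Q_{k+1}$ forces $2(k+2)(5-k)=6(k+2)$, false for $k=4$. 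This closes your open cases (and reproves $k\ge 6$). Incidentally, your coefficient $6$ is the correct one; the paper writes $3$ in \autoref{eqodd1apb} and \autoref{eqeven1apb}, an arithmetic slip ($8-2=6$) that does not affect its conclusions.
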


In \autoref{thmodd}, the subset $\mathcal{B}$ is either contained in $\mathcal{A}$ or contains exactly one-party of $\mathcal{A}$. So for any odd $n=2k+1$, we can define a $k$-uniform hypergraph $H_k$ as follows: the vertex set is $[n]$, and for a fixed subset $\mathcal{A}\subset [n]$ of size $k+2$,
 a $k$-subset $\mathcal{B}\subset [n]$ is a hyperedge if $|\mathcal{B}\cap \mathcal{A}|=1$ or $k$.
 When $k=2$, $H_2$ is just the simple complete graph on five vertices. Then, for odd $n=2k+1$, $G_{k}(\ket{\psi})$ is $H_k$-free for any $n$-qubit pure state $\ket{\psi}$, and hence $\Qex(2k+1) \le \ex_{k}(2k+1,H_k)$. Next, we give a simple upper bound of this Tur\'an's  extremal number.

\begin{proposition}\label{turannumberodd}
	For any $k\geq 2$,
	$\ex_k(2k+1, H_k) \le \binom{2k+1}{k} - \left\lceil\binom{2k+1}{k+2}\big/  \left(\binom{k+1}{2} + k\right) \right\rceil$.
%
\end{proposition}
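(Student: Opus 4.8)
The plan is to translate the upper bound on $\ex_k(2k+1,H_k)$ into a lower bound on the number of \emph{non}-edges that any $H_k$-free $k$-uniform hypergraph on $[2k+1]$ must contain, and then to obtain that lower bound by a double-counting argument. Writing $n=2k+1$, let $G=(V,E)$ be any $n$-vertex, $k$-uniform, $H_k$-free hypergraph and let $N:=\binom{[n]}{k}\setminus E$ be its set of non-edges, so that $|E|=\binom{2k+1}{k}-|N|$. Maximizing $|E|$ over all $H_k$-free $G$ is the same as minimizing $|N|$; hence it suffices to show that every such $G$ satisfies $|N|\ge \left\lceil\binom{2k+1}{k+2}\big/\left(\binom{k+1}{2}+k\right)\right\rceil$, since this then bounds the extremal (maximal-edge) hypergraph in particular.

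First I would unwind what $H_k$-freeness says about $N$. By the definition of $H_k$, a copy of $H_k$ placed on a $(k+2)$-subset $\mathcal{A}$ consists of all $k$-subsets $\mathcal{B}$ with $|\mathcal{B}\cap\mathcal{A}|\in\{1,k\}$; as the host and $H_k$ share the vertex set $[n]$, $G$ contains a copy of $H_k$ precisely when, for \emph{some} $(k+2)$-subset $\mathcal{A}$, all these $\mathcal{B}$ are edges. Consequently $H_k$-freeness is equivalent to the covering condition: for every $(k+2)$-subset $\mathcal{A}\subset[n]$ there is at least one $k$-subset $\mathcal{B}$ with $|\mathcal{B}\cap\mathcal{A}|\in\{1,k\}$ and $\mathcal{B}\in N$. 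In other words, $N$ must ``hit'' every $(k+2)$-set through an intersection of size $1$ or $k$.

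The core computation is to count, for a fixed non-edge $\mathcal{B}\in N$, how many $(k+2)$-subsets $\mathcal{A}$ it covers. Since $|\mathcal{B}|=k$ and $|[n]\setminus\mathcal{B}|=k+1$, a $(k+2)$-set $\mathcal{A}$ with $|\mathcal{B}\cap\mathcal{A}|=j$ arises by choosing $j$ elements of $\mathcal{B}$ and $k+2-j$ elements of its complement, giving $\binom{k}{j}\binom{k+1}{k+2-j}$ choices (note every $(k+2)$-set automatically has $j\ge 1$). Taking $j=k$ yields $\binom{k}{k}\binom{k+1}{2}=\binom{k+1}{2}$, and $j=1$ yields $\binom{k}{1}\binom{k+1}{k+1}=k$, while all other $j$ contribute nothing to the cover. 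Hence each $\mathcal{B}\in N$ covers \emph{exactly} $\binom{k+1}{2}+k$ of the $(k+2)$-subsets.

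It remains to double-count the incidences between $N$ and the $(k+2)$-subsets. Counting pairs $(\mathcal{A},\mathcal{B})$ with $\mathcal{B}\in N$, $|\mathcal{A}|=k+2$, and $|\mathcal{B}\cap\mathcal{A}|\in\{1,k\}$ by summing over $\mathcal{B}$ gives exactly $|N|\left(\binom{k+1}{2}+k\right)$ by the previous paragraph; summing over $\mathcal{A}$ instead, the covering condition forces each of the $\binom{2k+1}{k+2}$ sets $\mathcal{A}$ into at least one pair, so the count is at least $\binom{2k+1}{k+2}$. Combining the two counts gives $|N|\ge \binom{2k+1}{k+2}\big/\left(\binom{k+1}{2}+k\right)$, and integrality of $|N|$ promotes this to the ceiling, completing the proof. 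I do not expect a genuine obstacle here: the argument is a clean covering/double-count, and the only places demanding care are the equivalence between $H_k$-freeness and the covering condition (reading a ``copy of $H_k$'' as sitting on an arbitrary $(k+2)$-set $\mathcal{A}$, not the fixed one used to define $H_k$) and the bookkeeping that pins the per-$\mathcal{B}$ count at exactly $\binom{k+1}{2}+k$.
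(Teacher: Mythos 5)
Your proof is correct and is essentially the paper's own argument: the paper also identifies copies of $H_k$ with $(k+2)$-subsets $\mathcal{A}$, counts that a removed hyperedge $\mathcal{B}$ destroys at most $\binom{k+1}{2}+k$ copies (split into the cases $|\mathcal{B}\cap\mathcal{A}|=k$ and $|\mathcal{B}\cap\mathcal{A}|=1$), and divides $\binom{2k+1}{k+2}$ by this quantity, taking the ceiling. Your phrasing in terms of non-edges covering $(k+2)$-sets and a double count of incidences is just the complementary formulation of the same removal/corruption count, with the minor refinement that the per-$\mathcal{B}$ count is exactly, not merely at most, $\binom{k+1}{2}+k$.
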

\begin{proof} We prove the upper bound by starting from the complete $k$-uniform hypergraph $K_{2k+1}^{k}$ and counting how many hyperedges have to be removed to make the resultant  $H_k$-free. We need to count how many copies of $H_k$ are corrupted after removing a hyperedge.

Suppose we remove a hyperedge $\mathcal{B}$, which is a subset of size $k$. Then we count the number of copies of $H_k$ that containing $\mathcal{B}$ as a hyperedge. There are two cases. If $|\mathcal{B}\cap \mathcal{A}|=k$, then there are at most $\binom{k+1}{2}$ such $H_k$'s. Otherwise, if $|\mathcal{B}\cap \mathcal{A}|=1$, then there are at most $k$ such $H_k$'s. So removing a  hyperedge $\mathcal{B}$ will corrupt at most $\binom{k+1}{2}+k$ copies of $H_k$. Since there are $\binom{2k+1}{k+2}$ copies of $H_k$ in $K_{2k+1}^{k}$,  at least $\left\lceil\binom{2k+1}{k+2}\big/  (\binom{k+1}{2} + k) \right\rceil$ hyperedges have to be removed from $K_{2k+1}^{k}$ to make it  $H_k$-free. So $\ex_k(2k+1, H_k) \le \binom{2k+1}{k} - \left\lceil\binom{2k+1}{k+2}\big/  \left(\binom{k+1}{2} + k\right) \right\rceil$.

\end{proof}

 By \autoref{turannumberodd}, we have for $k \ge 3$ and $k\neq 5$,
 \begin{equation}\label{eq2k1}
   \Qex(2k+1) \le \ex_{k}(2k+1,H_k) \le \binom{2k+1}{k} - \left \lceil\binom{2k+1}{k+2} \bigg/  \left(\binom{k+1}{2} + k\right) \right \rceil.
 \end{equation}

%
%

When $k=3$, \autoref{eq2k1} gives  $\Qex(7) \le \ex_3 (7, H_3) \le
\binom{7}{3}-\left\lceil\binom{7}{5}\big/  \left(\binom{4}{2} + 3\right) \right\rceil = 32$, which is tight since $\Qex (7) = 32$
\cite{PhysRevLett.118.200502}. For $k=4$,  Zha \emph{et al.} \cite{Zha_2020} showed that there exist a $9$-qubit pure state with $110$ maximally mixed $4$-body reductions, then we have $110 \le \Qex(9) \le \ex_3 (9, H_4) \le 120$.

Finally, we mention that we don't assume that the pure state is $(\lfloor n/2\rfloor-1)$-uniform when deducing these bounds in this section, while this was assumed in  \cite{Zha_2018,Zha_2020}.
 However, when a state achieves certain upper bound, it must be $(\lfloor n/2\rfloor-1)$-uniform in some cases. See \autoref{2meme} in \autoref{sec:graphstate}.

\section{Improving the upper bound of $4m$-qubit}\label{sec:PEME}
%
In the previous section, we gave general upper bounds for $\Qex(n)$ by applying results in  \cite{PhysRevLett.118.200502}.
In this section, we  improve the upper bound for the case $n=4m$ by a more refined analysis than the proof of
\autoref{nonevenmod42}.


\begin{theorem}\label{nonevenmod4}
Let $\ket{\psi}$ be an $n$-qubit pure state, where $n=4m$ and $m\geq 2$.
For any $\mathcal{A} \subset [n]$ with
$\card{\mathcal{A}} = 2m+1$, there exists $\mathcal{B} \subset \mathcal{A}$ with $\card{\mathcal{B}} = 2m$ such
that the reduction of $\ket{\psi}$ to $\mathcal{B}$ is not maximally mixed.
\end{theorem}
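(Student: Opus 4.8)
The plan is to argue by contradiction. Suppose some $\mathcal{A}\subset[n]$ with $\card{\mathcal{A}}=2m+1$ has the property that \emph{every} $2m$-subset $\mathcal{B}\subset\mathcal{A}$ gives a maximally mixed reduction $\rho_{\mathcal{B}}$; write $q:=\card{\mathcal{A}}=2m+1$. My first step would be to convert this hypothesis into the vanishing of Bloch coefficients. Each $\mathcal{B}_i:=\mathcal{A}\setminus\{i\}$ has size $2m=n/2$, so it is one half of a balanced bipartition; since complementary reductions of a pure state share the same spectrum, a maximally mixed half forces the complementary half $\bar{\mathcal{B}}_i=([n]\setminus\mathcal{A})\cup\{i\}$ to be maximally mixed too. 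Because $\rho_{\mathcal{C}}$ is maximally mixed exactly when $r_\gamma=0$ for every string $\sigma_\gamma$ with $\emptyset\neq\supp(\sigma_\gamma)\subseteq\mathcal{C}$, letting $i$ range over $\mathcal{A}$ kills simultaneously every $\sigma_\gamma$ with $\supp(\sigma_\gamma)\subsetneq\mathcal{A}$ (from the $\mathcal{B}_i$) and every $\sigma_\gamma$ with $\card{\supp(\sigma_\gamma)\cap\mathcal{A}}\le 1$ (from the $\bar{\mathcal{B}}_i$). The only strings supported inside $\mathcal{A}$ that survive are the full-support ones, so tracing out $\bar{\mathcal{A}}$ yields
\[
\rho_{\mathcal{A}}=\frac{1}{2^{q}}\left(I_2^{\otimes q}+T\right),\qquad T:=P^{(\mathcal{A})}_{q},
\]
where $T$ is a real linear combination of weight-$q$ Pauli strings, each acting nontrivially on all $q$ qubits of $\mathcal{A}$.

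The second step pins down the spectrum of $T$ forced by purity. The complement $\bar{\mathcal{A}}$ has only $n-q=2m-1$ qubits, so $\rho_{\bar{\mathcal{A}}}$ has rank at most $2^{2m-1}$; as $\rho_{\mathcal{A}}$ and $\rho_{\bar{\mathcal{A}}}$ share their nonzero spectrum, $\rank\rho_{\mathcal{A}}\le 2^{2m-1}$ as well. Hence the kernel of $\rho_{\mathcal{A}}$, i.e.\ the $(-1)$-eigenspace of $T$, has dimension at least $2^{q}-2^{2m-1}=3\cdot 2^{2m-1}$. In particular $T\neq 0$ (otherwise $\rho_{\mathcal{A}}$ would be the full-rank maximally mixed state), and the eigenvalue $-1$ of $T$ has multiplicity at least $3\cdot 2^{2m-1}$, three quarters of the whole $2^{q}$-dimensional space.

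The decisive step is to exhibit an antiunitary symmetry making the spectrum of $T$ symmetric about $0$. Let $\Theta:=\sigma_y^{\otimes q}\,\mathcal{K}$, where $\mathcal{K}$ is complex conjugation in the computational basis. A one-qubit check gives $\sigma_y\overline{\sigma_\alpha}\,\sigma_y=-\sigma_\alpha$ for each $\alpha\in\{x,y,z\}$, so on a full-support string $\Theta\sigma_\gamma\Theta^{-1}=(-1)^{q}\sigma_\gamma$. Since $q=2m+1$ is \emph{odd} this is $-\sigma_\gamma$, and because $T$ is a real combination of such strings, $\Theta T\Theta^{-1}=-T$. As $\Theta$ is antiunitary it carries each $\lambda$-eigenvector of the Hermitian operator $T$ to a $(-\lambda)$-eigenvector, so the eigenvalues of $T$ occur in $\pm\lambda$ pairs of equal multiplicity; in particular the multiplicity of $+1$ equals that of $-1$, hence is also at least $3\cdot 2^{2m-1}$. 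Summing the two eigenspaces gives dimension at least $6\cdot 2^{2m-1}>2^{2m+1}=2^{q}$, which is impossible. This contradiction proves the claim.

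The conceptual crux—and the ``refined analysis'' over \autoref{nonevenmod42}—is the interplay of the last two steps: one must notice that the balanced-bipartition collapse leaves $\rho_{\mathcal{A}}$ built \emph{solely} from full-support Pauli strings, and that precisely when $n=4m$ (so that $q=n/2+1$ is odd) all such strings anticommute with the time-reversal-type operator $\Theta$, forcing the spectrum into $\pm$ pairs. For $n=4m+2$ the weight $q$ is even, $\Theta T\Theta^{-1}=+T$, the spectral symmetry vanishes, and a $(-1)$-eigenspace of three-quarters dimension is no longer obstructed—explaining why the sharpened bound is special to $n\equiv 0\pmod 4$. The points a careful write-up must get right are the bookkeeping of vanishing coefficients in the first step (handling the $\mathcal{B}_i$ and their complements at once) and the verification that $\Theta$ is a genuine antiunitary with $\Theta^2=-I$, so that the eigenspace pairing it induces is dimension-preserving.
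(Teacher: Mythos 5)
Your proof is correct, and it takes a genuinely different route from the paper's. Both arguments begin the same way: assuming every $2m$-subset of $\mathcal{A}$ gives a maximally mixed reduction, the Bloch expansion of $\rho_{\mathcal{A}}$ collapses to $\frac{1}{2^{2m+1}}\left(I+P_{2m+1}\right)$ with only full-support terms surviving. From there the paper follows the Huber--G\"uhne--Siewert machinery: it first shows $\rho_{\bar{\mathcal{A}}}$ is maximally mixed (inherited from the balanced bipartition $[2m]\mid[4m]\setminus[2m]$), which yields the exact projector identity $\rho_{\mathcal{A}}^2=2^{1-2m}\rho_{\mathcal{A}}$, and then applies the parity rule (\autoref{rule}): in $(I+P_{2m+1})^2=4(I+P_{2m+1})$ the square $P_{2m+1}^2$ contributes only even-weight terms, so collecting odd-weight terms gives $2P_{2m+1}=4P_{2m+1}$, hence $P_{2m+1}=0$, a contradiction. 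You instead use only the trivial rank bound $\rank\rho_{\mathcal{A}}=\rank\rho_{\bar{\mathcal{A}}}\le 2^{2m-1}$ (so the $(-1)$-eigenspace of $T=P_{2m+1}$ has dimension at least $3\cdot 2^{2m-1}$) together with the spin-flip antiunitary $\Theta=\sigma_y^{\otimes q}\mathcal{K}$, which anticommutes with $T$ precisely because $q=2m+1$ is odd; the resulting $\pm\lambda$ spectral pairing then overflows the total dimension $2^{2m+1}$. Your route buys self-containedness: you never invoke the parity-rule lemma, nor do you need $\rho_{\bar{\mathcal{A}}}$ to be maximally mixed --- only that it lives on $2m-1$ qubits --- and it makes transparent why $n\equiv 0\pmod 4$ is special (oddness of $\card{\mathcal{A}}$). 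The paper's route buys uniformity with the rest of its results: the same parity-rule template handles the even and odd $n$ cases in \autoref{nonevenmod42} and \autoref{thmodd}. The two mechanisms are cousins --- conjugation by $\sigma_y^{\otimes q}\mathcal{K}$ is exactly the grading that multiplies weight-$w$ Pauli terms by $(-1)^w$, the same parity structure the lemma encodes --- but the paper extracts an exact algebraic identity while you extract a spectral symmetry plus a dimension count. Two minor tightenings: the complementary reductions $\bar{\mathcal{B}}_i$ in your first step are unnecessary, since the $\mathcal{B}_i$ alone kill every proper-support string inside $\mathcal{A}$; and the dimension-preserving property of the eigenspace pairing needs only that $\Theta$ is an antiunitary bijection, not that $\Theta^2=-I$.
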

\begin{proof}
The proof is by contradiction.	Without loss of generality, assume $\mathcal{A} = [2m+1]$
and the reduction of $\ket{\psi}$ to any $2m$ parties in $\mathcal{A}$ is maximally mixed.
Notice the fact that for a system of $4m$ parties, if the reduction to $\mathcal{J}$ is
maximally mixed and $\card{\mathcal{J}}=2m$, then the reduction to $\bar{\mathcal{J}}$ is also maximally mixed.
Therefore, from the fact that the reduction to $ [2m]$ is maximally mixed, we get that the
reduction to $[4m]\setminus[2m]$, and hence the reduction to $\bar{\mathcal{A}}$, is maximally mixed.
	
Thus the reduction to $\bar{\mathcal{A}}$ has all ${2^{2m-1}}$ nonzero eigenvalues equal to $\lambda={2^{1-2m}}$.
By
\autoref{eqrhoA2}, the reduction to $\mathcal{A}$ is proportional to a projector,
\begin{equation}\label{rho2n+1pro}
	\rho_{\mathcal{A}}^2={2^{1-2m}}\rho_{\mathcal{A}}.
\end{equation}
Since all reductions to $2m$-party obtained from $\mathcal{A}$ are maximally mixed,
we can expand the reduction to $\mathcal{A}$ in the Bloch representation,
\begin{equation}\label{rhoBB2n+1}
	\rho_{\mathcal{A}}=\frac{1}{2^{2m+1}}( I_2+P_{2m+1}).
\end{equation}
Combining \autoref{rho2n+1pro}  and \autoref{rhoBB2n+1}, we obtain
\begin{equation}\label{eqp2n+1}
	( I_2+P_{2m+1})( I_2+P_{2m+1})=4( I_2+P_{2m+1}).
\end{equation}

	By applying the parity rule outlined in \autoref{rule}, we observe that only specific products on the left-hand side of \autoref{eqp2n+1} can contribute to \( P_{2m+1} \) on the right-hand side. Notably, the term \( P_{2m+1}^2 \) on the left-hand side does not contribute to \( P_{2m+1} \) on the right-hand side, as dictated by \autoref{rule}. Consequently, we can gather all terms of odd weight from both sides of \autoref{eqp2n+1} to derive:
	\begin{equation}
		2P_{2m+1}=4P_{2m+1}.
	\end{equation}
	So $P_{2m+1}=0$, which means $\rho_{\mathcal{A}}=\frac{1}{2^{2m+1}} I_2$,
	a contradiction.
\end{proof}

Similar to the analysis in the previous section, here we get the following result for $m\geq 2$,
\begin{equation}\label{eqq}
	\Qex (4m) \le \ex_{2m}(4m,K_{2m+1}^{2m}) \le \binom{4m}{2m} - \frac{1}{2m+1}\binom{4m}{2m}=\binom{4m}{2m-1}.
\end{equation}
The bound in \autoref{eqq} improves the one in \autoref{eq4m2} also in a combinatorial way: forbidding a smaller hypergraph $K_{2m+1}^{2m}$ leads to less edges in the hypergraph. Note that the latter equality in \autoref{eqq} holds if and only if the
$2m$-uniform hypergraph $G_{2m}(\ket{\psi})$ satisfies the following property:
let $\bar{E}$ be the set of $2m$-subsets that are not hyperedges,
then $|\mathcal{A}\cap \mathcal{B}|\leq 2m-2$ for any $\mathcal{A}\neq \mathcal{B}\in \bar{E}$,
or equivalently, each $\mathcal{A}\in \bar{E}$ corrupted a different $K_{2m+1}^{2m}$ \cite{de1983extension}.

By \autoref{eqq}, we have $\Qex (8) \le \ex_4(8, K_{5}^{4}) \le 56$  and $\Qex (12) \le 792$, which greatly improve those from \autoref{eq2k}.

 In the next section, we will show that some quantum states are $4$-EME states by using the improved bound.

\section{A Construction by graph state}\label{sec:graphstate}

In this section, we construct an $8$-qubit state which achieves \autoref{eqq}, that is 
a $4$-EME state with eight qubits. Then, we show that any $4m$-qubit pure state achieving this upper bound must be a $(2m-1)$-uniform state. Hence, a $4$-EME state with eight qubits must be a PEME state.
Furthermore, we construct families of graph states and estimate their values $m_k$. These estimations provide lower bounds for $\Qex(n)$.
 Finally, we give a lower bound of $\Qex(n,k)$ from random graph states by a probabilistic method.

%
First, we introduce the definition of a graph state formalized by adjacency matrices \cite{helwig2013absolutely}. Let $G$ be a simple graph with vertex set $[n]$ and $A=(a_{ij})_{n\times n} $ be its adjacency matrix. Let $\widetilde{A} = (\widetilde{a_{ij}})_{n\times n}$ with $\widetilde{a_{ij}} = a_{ij}$ for $i \le j$ and $0$ otherwise. Then the corresponding graph state is defined as \[\ket{G}:=\frac{1}{\sqrt{2^n}}\sum_{c \in \mathbb{Z}_2^n }(-1)^{c\widetilde{A}c^T} \ket{c},\] where $c$ is a row vector.

By Theorem~$4$ of Ref.~\cite{Feng2017}, we can easily decide which $k$-reduction of $\ket{G}$ is maximally mixed.

\begin{corollary}[\cite{Feng2017,helwig2013absolutely}]
	\label{fengkre}
Let $G$, $A$ and $\ket{G}$ be defined as above. For any $K\subset[n]$ of size $ k\le n/2$,
if the $k\times (n-k)$ submatrix of $A$ with rows in $K$ and columns in $\bar{K}$,
denoted by $A_{K\times \bar{K}}$, has rank $k$ over $\mathbb{F}_2$, then the reduction of
$\ket{G}$ to $K$ is maximally mixed.
\end{corollary}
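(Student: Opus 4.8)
The plan is to compute the entries of the reduced density matrix $\rho_K$ directly from the definition of $\ket{G}$ and show that the rank hypothesis forces $\rho_K = I_{2^k}/2^k$. First I would split each bit string $c \in \mathbb{Z}_2^n$ as $c = (x,y)$, where $x \in \mathbb{Z}_2^k$ indexes the qubits in $K$ and $y \in \mathbb{Z}_2^{n-k}$ indexes those in $\bar{K}$. Since $A$ is symmetric while $\widetilde{A}$ retains only its upper-triangular part, each unordered cross pair $\{i,j\}$ with $i\in K$, $j\in \bar{K}$ contributes exactly once (using $a_{ij}=a_{ji}$), so the quadratic form in the phase factors modulo $2$ as
\[
c\widetilde{A}c^T \equiv Q_K(x) + x\,A_{K\times\bar{K}}\,y^T + Q_{\bar{K}}(y) \pmod 2,
\]
where $Q_K$ and $Q_{\bar{K}}$ collect the intra-$K$ and intra-$\bar{K}$ terms and the mixed term is governed precisely by the submatrix $A_{K\times\bar{K}}$.

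Next I would trace out $\bar{K}$. Writing $\rho=\ketbra{G}{G}$ and setting $c=(x,y)$, $c'=(x',y)$, the matrix element is
\[
(\rho_K)_{x,x'} = \frac{1}{2^n}\sum_{y\in\mathbb{Z}_2^{n-k}} (-1)^{c\widetilde{A}c^T + c'\widetilde{A}c'^T}.
\]
The intra-$\bar{K}$ contributions $Q_{\bar{K}}(y)$ cancel between the two phases, leaving
\[
(\rho_K)_{x,x'} = \frac{(-1)^{Q_K(x)+Q_K(x')}}{2^n}\sum_{y\in\mathbb{Z}_2^{n-k}}(-1)^{(x+x')A_{K\times\bar{K}}\,y^T}.
\]
By orthogonality of characters on $\mathbb{Z}_2^{n-k}$, the inner sum equals $2^{n-k}$ when $(x+x')A_{K\times\bar{K}}=0$ and vanishes otherwise.

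The final step is where the rank hypothesis enters. Because $k \le n/2 \le n-k$, the $k\times(n-k)$ matrix $A_{K\times\bar{K}}$ has rank at most $k$, and rank exactly $k$ means its $k$ rows are $\mathbb{F}_2$-linearly independent, so the left-multiplication map $u \mapsto u\,A_{K\times\bar{K}}$ on $\mathbb{Z}_2^k$ has trivial kernel, i.e.\ is injective. Hence $(x+x')A_{K\times\bar{K}}=0$ forces $x+x'=0$, that is $x=x'$; on the diagonal $Q_K(x)+Q_K(x)\equiv 0 \pmod 2$, so the surviving phase is $+1$. Therefore
\[
(\rho_K)_{x,x'} = \frac{2^{n-k}}{2^n}\,\delta_{x,x'} = \frac{1}{2^k}\,\delta_{x,x'},
\]
i.e.\ $\rho_K = I_{2^k}/2^k$ is maximally mixed.

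I expect the main obstacle to be bookkeeping rather than conceptual: one must carefully isolate the mixed term of the quadratic form as $x\,A_{K\times\bar{K}}\,y^T$ despite $\widetilde{A}$ being upper-triangular, and keep straight that the exponent arithmetic lives in $\mathbb{F}_2$ while the prefactors are genuine complex amplitudes. Everything else reduces to character orthogonality together with the elementary linear-algebra fact that full row rank over $\mathbb{F}_2$ is equivalent to injectivity of left multiplication. I would also remark that this computation is exactly the specialization of Theorem~4 of \cite{Feng2017} to graph states, so the result may alternatively be quoted from there.
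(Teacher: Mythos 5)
Your proof is correct, but it takes a genuinely different route from the paper: the paper gives no argument at all for this statement, simply quoting it as a corollary of Theorem~4 of \cite{Feng2017} (see also \cite{helwig2013absolutely}). Your computation checks out at every step: splitting $c=(x,y)$ gives $c\widetilde{A}c^T\equiv Q_K(x)+xA_{K\times\bar{K}}y^T+Q_{\bar{K}}(y)\pmod 2$, since each cross pair $\{i,j\}$ with $i\in K$, $j\in\bar{K}$ occurs exactly once in the upper-triangular part and $a_{ij}=a_{ji}$; tracing over $\bar{K}$ pairs bra and ket with the same $y$, so the $Q_{\bar{K}}$ phases cancel; character orthogonality collapses the $y$-sum to the condition $(x+x')A_{K\times\bar{K}}=0$; and full row rank (attainable because $k\le n-k$) makes $u\mapsto uA_{K\times\bar{K}}$ injective over $\mathbb{F}_2$, forcing $x=x'$, where the surviving phase $(-1)^{2Q_K(x)}=1$. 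As for what each approach buys: the paper's citation buys brevity and generality, since the theorem in \cite{Feng2017} covers a broader class of constructions of which this is a specialization; your direct computation buys self-containedness and in fact yields slightly more than the stated corollary, namely the converse as well --- if $\rank A_{K\times\bar{K}}<k$, the kernel of $u\mapsto uA_{K\times\bar{K}}$ contains some $u\neq 0$, and then $(\rho_K)_{x,x+u}=\pm 2^{-k}\neq 0$ is a nonzero off-diagonal entry, so $\rho_K$ is \emph{not} maximally mixed. That converse direction is what one needs to certify \emph{exactly} which $K$ give maximally mixed reductions (e.g., the count of $56$ full-rank subsets for $\ket{T_4}$ in \autoref{sub:4eme}), a point the paper's rank-counting computations leave implicit.
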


\subsection{A construction of a PEME state in $(\mathbb{C}^2)^{\otimes 8}$}\label{sub:4eme}

By \autoref{fengkre}, to estimate $m_k(\ket{G})$, it is enough to count the number of $k$-subsets $K\subset [n]$ such that $A_{K\times \bar{K}}$ has a full rank. For example, the graph $T_4$ with eight vertices in \autoref{fig1} has the following adjacency matrix,
\begin{equation}
	A=\left[\begin{array}{llllllll}
		0 & 1 & 1 & 1 & 1 & 0 & 0 & 0  \\
		1 & 0 & 1 & 1 & 0 & 1 & 0 & 0  \\
		1 & 1 & 0 & 1 & 0 & 0 & 1 & 0  \\
		1 & 1 & 1 & 0 & 0 & 0 & 0 & 1  \\
		1 & 0 & 0 & 0 & 0 & 1 & 1 & 1  \\
		0 & 1 & 0 & 0 & 1 & 0 & 1 & 1  \\
		0 & 0 & 1 & 0 & 1 & 1 & 0 & 1  \\
		0 & 0 & 0 & 1 & 1 & 1 & 1 & 0  \\
	\end{array}\right].
\end{equation}
It can be checked that there are $56$ subsets $K$ of four rows such that $A_{K\times \bar{K}}$ has  rank four.
Then the graph state $\ket{T_4}$ is a $4$-EME in $(\mathbb{C}^2)^{\otimes 8}$, since $m_4(\ket{T_4})=56$, achieving the upper bound in \autoref{eqq}.

\begin{figure}
	\centering
	\includegraphics[width=0.3 \textwidth]{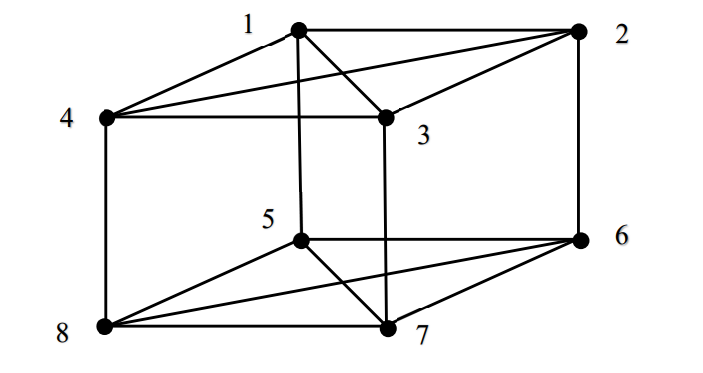}
	\caption{a PEME state of eight qubits.}\label{fig1}
\end{figure}

Next, we show that $\ket{T_4}$ is $3$-uniform.
Here, we study a more general problem: for integers $s<k$, how large does $m_k(\ket{\psi})$ need to be to ensure that the pure state $\ket{\psi}$ is $s$-uniform?
 Indeed, we have the following observation: if $m_k(\ket{\psi})>\binom{n}{k}-\binom{n-s}{k-s}$, then $\ket{\psi}$ must be $s$-uniform.
This is because if an $s$-party reduction is not maximally mixed, then any reduction to a $k$-party containing this  $s$-party is not maximally mixed; 
when $m_k(\ket{\psi})>\binom{n}{k}-\binom{n-s}{k-s}$, these maximally mixed $k$-party reductions must cover all
possible $s$-party reductions.
Further, when $n = 2k$, we have  that  $\ket{\psi}$ is $s$-uniform
if  $m_k(\ket{\psi})>\binom{n}{k}-2\binom{n-s}{k-s}$; this is
because maximally mixed $k$-party reductions always occur in pairs in this case.

Next, we consider the special case when $n=4m$ and $s=2m-1$. We give an example first.
\begin{theorem}\label{4EME}
	Any $4$-EME state  in $(\mathbb{C}^2)^{\otimes 8}$ is $3$-uniform, and thus a PEME state.
\end{theorem}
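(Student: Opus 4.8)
The plan is to exploit the extremal structure that $m_4(\ket{\psi}) = \Qex(8) = 56$ forces on the associated hypergraph, rather than the generic $s$-uniformity threshold. Indeed, the observation preceding this theorem only guarantees $3$-uniformity once $m_4 > \binom{8}{4} - 2\binom{5}{1} = 60$, and $56$ falls short, so a finer argument using the equality case of \autoref{eqq} is needed. First I would record that, by \autoref{nonevenmod4}, the hypergraph $G_4(\ket{\psi})$ is $K_5^4$-free, and since it has exactly $56 = \ex_4(8,K_5^4)$ edges it is an \emph{extremal} $K_5^4$-free hypergraph, so the lower bound $T(8,5,4)\ge 14$ coming from \autoref{turannumber222} is saturated. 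Writing $\bar{E}$ for the set of $14$ four-subsets that are \emph{not} edges (equivalently, the four-party reductions that are not maximally mixed), saturation means the $\binom{8}{5}=56$ five-subsets are covered exactly once by the removed four-subsets; this is precisely the equality condition discussed after \autoref{eqq}, namely any two distinct $\mathcal{A}, \mathcal{B} \in \bar{E}$ satisfy $|\mathcal{A} \cap \mathcal{B}| \le 2$.

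The core step is then a short contradiction. Suppose $\ket{\psi}$ were not $3$-uniform. Then some subset of size at most $3$ has a non-maximally-mixed reduction; since $\rho_S$ is a partial trace of $\rho_{S'}$ whenever $S \subset S'$, a non-maximally-mixed $\rho_S$ forces $\rho_{S'}$ to be non-maximally-mixed as well, so I may enlarge the offending set to a $3$-subset $S$ with $\rho_S$ not maximally mixed. Each of the five four-supersets $S \cup \{x\}$, $x \in [8]\setminus S$, then contains the bad set $S$ and is therefore itself not maximally mixed, so all five lie in $\bar{E}$. But any two of them meet exactly in $S$, giving $|(S\cup\{x\}) \cap (S\cup\{y\})| = 3 > 2$, contradicting the intersection bound above. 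Hence every reduction to at most $3$ parties is maximally mixed, i.e. $\ket{\psi}$ is $3$-uniform.

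Finally I would deduce the PEME property. Once $\ket{\psi}$ is $3$-uniform, $m_k(\ket{\psi}) = \binom{8}{k}$ for $k = 1,2,3$, which is the trivial upper bound and hence equals $\Qex(8,k)$ (already attained by $\ket{T_4}$). Combined with $m_4(\ket{\psi}) = \Qex(8)$, this makes $\ket{\psi}$ $k$-EME for every $k \in [4]$, that is, a PEME state.

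I expect the main obstacle to be the clean extraction and justification of the pairwise-intersection bound $|\mathcal{A}\cap\mathcal{B}|\le 2$ from the extremality of $G_4(\ket{\psi})$: one must verify that attaining $56$ genuinely saturates the bound on $T(8,5,4)$, so that the covering of the five-subsets by the removed four-subsets is exact. The remaining combinatorics and the superset-monotonicity of non-maximal-mixedness are routine.
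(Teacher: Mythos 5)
Your proof is correct and takes essentially the same route as the paper's: both use the equality case of \autoref{eqq} (via the remark following it) to conclude that any two non-maximally-mixed $4$-party reductions intersect in at most two parties, and then derive a contradiction from the five $4$-party supersets of a non-maximally-mixed $3$-set, which pairwise intersect in three parties. Your write-up is in fact slightly more careful than the paper's, as it makes explicit the saturation of the bound on $T(8,5,4)$ and the upward monotonicity of non-maximal mixedness used to pass from a bad set of size at most $3$ to a bad $3$-set.
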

\begin{proof} By \autoref{sub:4eme}, any $4$-EME state $\ket{\psi}$ in $(\mathbb{C}^2)^{\otimes 8}$ has $m_4(\ket{\psi})=56$, which achieves the upper bound in \autoref{eqq}. Note that the upper bound of \autoref{eqq} is achieved if and only if every two $4$-reductions $\mathcal{A}$ and $\mathcal{B}$ that are not  maximally mixed satisfy $|\mathcal{A}\cap \mathcal{B}|<3$ by the remark after \autoref{eqq}.
	However if $\ket{\psi}$ is not $3$-uniform, say the reduction to the three-party $\{1,2,3\}$ is not maximally mixed,
	then the reductions to
	four-party $\{1,2,3,4\}$, $\{1,2,3,5\}$, $\{1,2,3,6\}$, $\{1,2,3,7\}$ and
	$\{1,2,3,8\}$ are not maximally mixed. A contradiction to $|\mathcal{A}\cap \mathcal{B}|<3$.
\end{proof}

Thus, we are able to prove that the graph state $\ket{T_4}$ is a PEME state. Note that the construction of PEME states in $(\mathbb{C}^2)^{\otimes 8}$
is not unique, the $3$-uniform quantum state constructed from orthogonal arrays by Li \emph{et al.} \cite{PhysRevA.99.042332} is also a PEME state. Later we will show that the PEME states constructed in these two ways are not LU-equivalent.
Indeed \autoref{4EME} can be generalized to any pure state  in $(\mathbb{C}^2)^{\otimes 4m}$ achieving \autoref{eqq}, whose proof is similar and thus omitted.
\begin{theorem}\label{2meme}Let  $m\geq 2$.
	Any pure state  in $(\mathbb{C}^2)^{\otimes 4m}$ achieving \autoref{eqq}  is $(2m-1)$-uniform, and thus a PEME state.
\end{theorem}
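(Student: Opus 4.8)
The plan is to mirror the argument of \autoref{4EME}, replacing the pair $(n,s)=(8,3)$ by the general pair $(n,s)=(4m,2m-1)$; the combinatorial skeleton is identical, so the work lies in checking that every inequality survives the generalization. The starting point is the equivalence recorded in the remark following \autoref{eqq}: a pure state attains the bound $m_{2m}(\ket{\psi})=\binom{4m}{2m-1}$ exactly when the collection $\bar{E}$ of $2m$-subsets whose reductions are \emph{not} maximally mixed is spread out, in the precise sense that $|\mathcal{A}\cap\mathcal{B}|\le 2m-2$ for all distinct $\mathcal{A},\mathcal{B}\in\bar{E}$. So I would first fix an extremal state $\ket{\psi}$ and record this intersection property as the sole structural fact about $\bar{E}$ that the argument will use.

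Next I would argue by contradiction, assuming $\ket{\psi}$ is not $(2m-1)$-uniform. Then some $(2m-1)$-subset, say $S=[2m-1]$ after relabeling, has a reduction $\rho_S$ that is not maximally mixed. I would invoke the monotonicity observation already stated in \autoref{sub:4eme}, namely that if an $s$-party reduction fails to be maximally mixed then so does every reduction on a superset, to conclude that $\rho_{S\cup\{j\}}$ is not maximally mixed for each of the $4m-(2m-1)=2m+1$ choices of $j\in[4m]\setminus S$. Thus $\bar{E}$ contains the $2m+1$ sets $S\cup\{j\}$, any two of which meet in exactly $S$, of size $2m-1$. Since $2m-1>2m-2$ and $m\ge 2$ guarantees at least two such extensions exist, this violates the intersection property of $\bar{E}$, giving the contradiction and forcing $(2m-1)$-uniformity.

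Finally I would upgrade $(2m-1)$-uniformity to the PEME conclusion: a $(2m-1)$-uniform state has $m_k(\ket{\psi})=\binom{4m}{k}$ for every $k\le 2m-1$, which is the trivial maximum and hence makes it $k$-EME there, while $m_{2m}(\ket{\psi})=\Qex(4m)$ makes it $2m$-EME; as $\lfloor 4m/2\rfloor=2m$, the state is $k$-EME for all admissible $k$ and therefore PEME. I do not expect a genuine obstacle here: the only thing to watch is that the count of extensions of $S$ is now $2m+1$ rather than five, so the key inequality to verify is $2m-1>2m-2$ (the exact analogue of $3>2$ in the eight-qubit case), and it holds for every $m\ge 2$. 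The one place requiring real care is the correct reading of the equivalence in the remark after \autoref{eqq}, since the whole contradiction rests on the at-most-$(2m-2)$ pairwise intersection bound for members of $\bar{E}$.
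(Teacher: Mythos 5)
Your proposal is correct and follows essentially the same route as the paper: the paper proves the $m=2$ case (\autoref{4EME}) via exactly this argument --- the intersection bound $|\mathcal{A}\cap\mathcal{B}|\le 2m-2$ for non-maximally-mixed $2m$-reductions from the remark after \autoref{eqq}, contradicted by the $2m+1$ supersets of a bad $(2m-1)$-set --- and then states that \autoref{2meme} follows by the same (omitted) generalization, which is precisely what you carried out.
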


By \autoref{2meme}, if there does not exist a $(2m-1)$-uniform state in $(\mathbb{C}^2)^{\otimes 4m}$ for some $m$, then we can infer that the upper bound in \autoref{eqq} cannot be reached. There are  many works on the existence of $k$-uniform quantum states.
For example,   Rains' bound \cite{rains1999quantum} says that  a $k$-uniform quantum state in $(\mathbb{C}^2)^{\otimes 6j+l}$ exists only if $k\leq 2j+1 $ when $ 0\leq l<5$ or $k \leq 2j+2$ when $l=5$. By \autoref{2meme}, this means the  upper bound  in \autoref{eqq} is not tight when $m\geq 4$.

However, for an aritrary graph $G$, it is usually difficult to determine $m_k(\ket{G})$. Next, we construct a family of special graphs, whose corresponding graph states can be analysed.

\subsection{A lower bound of $\Qex(2k)$ from explicit graph states}\label{sub2k}

Let $n=2k$ with $k\geq 2$. We generalize the graph in \autoref{fig1} to  a graph $T_k$ as follows. The vertex set consists of two parts $B=\{b_1, b_2, \ldots, b_k\}$ and $C=\{c_1, c_2, \ldots, c_k\}$. The graph $T_k$ induces a complete graph of size $k$ on both $B$ and $C$, and a matching $\{b_i,c_i\},i\in[k]$ between $B$ and $C$. Then its adjacency matrix has the following form,


\begin{equation}
A=	\left[\begin{array}{ll}
		J_k - I_k & \quad I_k  \\
	\quad	I_k & J_k - I_k  \\
	\end{array}\right],
\end{equation}
where $J$ is the matrix with all elements $1$, and $I$ is the identity matrix. Rows and columns of $A$ are indexed by $\{b_1, b_2, \ldots, b_k,c_1, c_2, \ldots, c_k\}$.

Now we consider $K$ as a $k$-subset of $B\cup C$ in different cases. If $K=B$ or $C$, then  $A_{K\times \bar{K}}=I_k$, which has full rank.

Assume $K\neq B$ and $K\neq C$. For convenience,  let $B_S:=\{b_s: s\in S\}$ for any subset $S\subset [k]$. Similarly we define $C_S$.
Then $K$ can be written as $K=B_S\cup C_{S'}$ with $S,S'$ proper subsets of $[k]$ satisfying $|S|=s$ and $|S'|=k-s$. By symmetry,  we assume $1 \leq k/2 \leq s\leq k-1$. Then  $A_{K\times \bar{K}}$ has the following form,
\begin{equation}\label{akk}
	A_{K\times \bar{K}}=	\left[\begin{array}{ll}
		 \,J_{s\times (k-s)}& P_1  \\
		P_2 & \,J_{(k-s)\times s} \\		
	\end{array}\right],
\end{equation}
where $P_1$ is an $s\times s$ matrix,
and $P_2$ is a $(k-s)\times (k-s)$ matrix.
Let $l=|S\cap \bar{S'}|$ and $l'=|S'\cap \bar{S}|$.
Suppose $|S\cap S'|=i\leq k-s$, then $l=s-i$ and $l'=k-s-i$.
Then there are exactly $l$ ones in $P_1$ distributed in distinct rows and
columns. Similarly, $P_2$ has exactly $l'$ ones distributed in
distinct rows and columns.
 To make it easier to understand, we will first present the cases $k = 2$ and $3$, and then generalize to the case where $k$ is greater than or equal to $4$.

Case $1$: $k = 2$. Then $s$ must be $1$. If $i=1$, then $S=S'$, and $P_1=P_2=0$ is a
$1 \times 1$ matrix. Thus the rank of $A_{K\times \bar{K}}$ over $\F_2$ is $2$. We have two choices of such $K$. If $i=0$, then $S=\bar{S'}$, and $P_1=P_2=1$ is a $1 \times 1$ matrix. Thus $A_{K\times \bar{K}}$ is not of full rank.
  Combining the cases $K=B$ or $C$, we have $m_2(\ket {T_2}) = 4$,  which is equal to $Qex(4)$.

Case $2$: $k = 3$. Then $s$ must be $2$. If $i=1$, then $P_2=0$ is a
$1 \times 1$ matrix and $P_1$ has exactly one entry with $1$ and all others with $0$. It can be checked that the rank of $A_{K\times \bar{K}}$ over $\F_2$ is $3$. If $i=0$, then $P_1$ is a permutation matrix of rank $2$, and $P_2 = 1$ is a $1 \times 1$ matrix. It can be checked that the rank of $A_{K\times \bar{K}}$ over $\F_2$ is $3$ as well.  Thus $m_2(\ket {T_3}) =2+2\times (\binom{3}{2}\times 2\times(3-2)+\binom{3}{2})= \binom{6}{3}=20$, which implies that $\ket {T_3}$ is an AME state.

Case $3$: $k \ge 4$.
If $|S\cap S'|=i\geq 2$. Then $l\le s-2$, that is, $P_1$ has at least two zero columns. Thus $A_{K \times \bar{K}}$ has two identical columns and is not of full rank.

If $i=1$, then  $l = s-1$ and $l'=k-s-1$.
By swapping rows or columns appropriately, ${A}_{K\times \bar{K}}$ can always be written in the form in  \autoref{akk} with
\begin{equation}
	{P}_1=	\left[\begin{array}{ll}
		I_{(s-1)\times(s-1)}&\, 0  \\
		0 & \,0 \\			
	\end{array}\right] \ \text{and}\
	{P}_2=	\left[\begin{array}{ll}
		I_{(k-s+1)\times(k-s+1)}&\, 0  \\
		0 & \,0 \\			
	\end{array}\right].
\end{equation}
We can then easily check that ${A}_{K\times \bar{K}}$ is of full rank.
It can be calculated that for each fixed $s$, the number of $K$'s that satisfy $i=1$ is $s(k-s)\binom{k}{s}$. In total, there are  $2\times \sum_{s=\lceil k/2 \rceil}^{k-1}s(k-s)\binom{k}{s}=\sum_{s=1}^{k-1}s (k-s)\binom{k}{s}$ many $K$'s.


If $i=0$, then $l = s$. In this case, $P_1$ is a permutation matrix of rank $s$ and $P_2$ is a permutation matrix of rank $k-s$.
By swapping rows or columns appropriately, we can make $P_1$ and $P_2$ identical matrices. 
%
%
We apply the following elementary transformations to determine the
rank of ${A}_{K \times \bar{K}}$.
First, by adding the last $k-s$ rows to each of the $i$-th row, for $i \in [s]$,
we change ${A}_{K \times \bar{K}}$ to
\begin{equation}
	\begin{bmatrix}
		0_{s \times (k-s)} & I_{s \times s} + (k-s)J_{s \times s} \\
		I_{(k-s) \times (k-s)} & J_{(k-s) \times s}
	\end{bmatrix}.
\end{equation}
Clearly the bottom-right part $J_{(k-s) \times s}$ can be
cancelled out by the first $k-s$ columns. So
\begin{equation}
	\rank {A}_{K \times \bar{K}} =
	k-s + \rank\qty(I_{s \times s} + (k-s)J_{s \times s}).
\end{equation}
It is easy to check that only when $k-s$ is even, or when  $k-s$ is odd and $s$ is even, $\rank\qty(I_{s \times s} + (k-s)J_{s \times s})=s$.
So  in this case, ${A}_{K \times \bar{K}}$ is
of full rank except when $k$ is even and $s$ is odd.
For each fixed $s$, the number of $K$'s that satisfy $i=0$ is $\binom{k}{s}$. In total, if $k$ is odd, then the number of such $K$'s with a full rank ${A}_{K\times \bar{K}}$ when $i=0$ is $\sum_{s=1}^{k-1}\binom{k}{s}$; if $k$ is even, then the number of $K$'s with a full rank ${A}_{K\times \bar{K}}$ when $i=0$ is $\sum_{s=2, s \text{ is even } }^{k-2}\binom{k}{s}$.

\begin{figure}
	\centering
	\includegraphics[width=0.3 \textwidth]{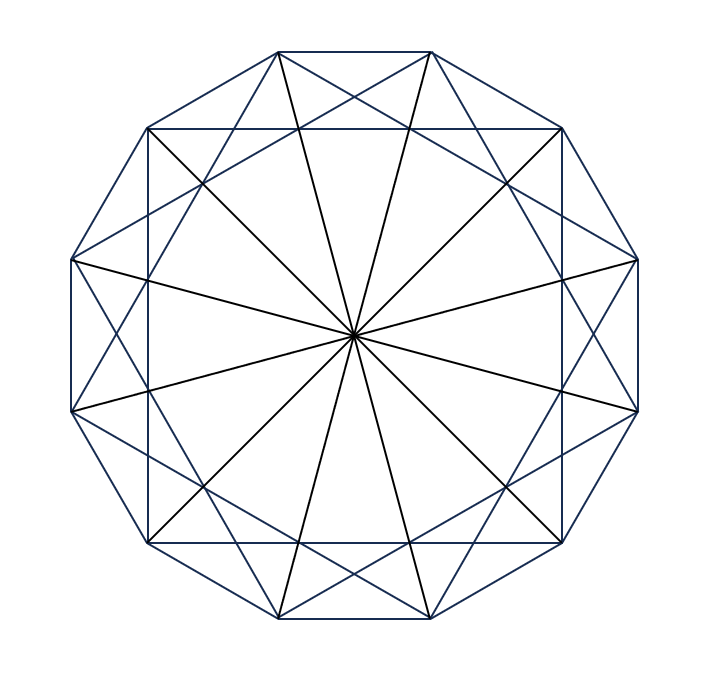}
	\caption{A $(1,3,6)$-circulant graph.}\label{fig2}
\end{figure}

\vspace{0.3cm}

Combining all pieces, we conclude that
\begin{equation}\label{eqlbg}
 \Qex(2k)\geq m_k(\ket {T_k}) =
\begin{cases}
\sum_{s=1}^{k-1}\binom{k}{s}(s(k-s)) + \sum_{s=2, s \text{ is even } }^{k-2}\binom{k}{s}+2
=2^{k-2}(k^2-k+2) &  \text{ $k$ even}, \\
\sum_{i=0}^{k}\binom{k}{s}(1+s(k-s))=2^{k-2}(k^2-k+4) & \text{$k$ odd}.
\end{cases}
\end{equation}


\autoref{eqlbg}
 gives a nice lower bound for $\Qex(2k)$ when $k$ is small.
	For example, when $k \leq 4$, $\Qex(2k) = m_k(\ket {T_k})$. In addition,  $\Qex(10) \ge m_5(\ket {T_5} = 192$, while the upper bound is $240$, and $\Qex(12) \ge m_6(\ket {T_6} = 512$, while the upper bound is $792$. For $k = 6$, we can construct a $5$-uniform graph state which has $540$ maximally mixed $6$-body reductions, which is the best lower bound we can find. See the corresponding graph in \autoref{fig2}. Danielsen 	\emph{et al.} showed that this graph state is the only $5$-uniform quantum state among $12$-qubit stabilizer states \cite{DANIELSEN20061351}. Combined with \autoref{2meme}, this also shows that the upper bound $792$ for $12$-qubit states is not reachable among stabilizer states.
	
	However, $m_k(\ket {T_k})$ as a lower bound becomes worse as $k$ tends to infinity, as we can note that
	$\lim_{k \to \infty}2^{k-2}(k^2-k+2)\binom{2k}{k}^{-1}=0$ and
	$\lim_{k \to \infty}2^{k-2}(k^2-k+4)\binom{2k}{k}^{-1}=0.$ In the next subsection we will use probabilistic methods to give a relatively good lower bound when $k$ tends to infinity.

\subsection{A lower bound of $\Qex(n,k)$ from random graph states}

In this subsection, we consider the random graph $G=G(n,1/2)$, that is a random graph with $n$ vertices such that any pair of vertices are adjacent with probability $1/2$. Let $\ket{G}$ denote the graph state associated with $G$, and $A$ denote the adjacency matrix of $G$, and
$\rho = \ketbra{G}{G}$. Now we compute the expected number of maximally mixed $k$-party reductions of $\rho$.


For a subset $K \subset [n]$ of $k$ parties with $k \le \floor{\frac{n}{2}}$, it is easy to see that $A_{K \times \bar{K}}$  satisfies the uniform distribution
over $\F_2^{k \times (n-k)}$, since the edges of $G$ are chosen independently and uniformly with probability $1/2$.
For positive integers $r \le s$, let $f(r,s)$ denote the number of matrices of rank $r$ in $\F_2^{r \times s}$.
It is known that $f(r,s) = \prod_{l = 0}^{r-1}\qty(2^s-2^l)$.
By \autoref{fengkre}, $\rho_K$ is maximally mixed if and only if $\rank A_{K \times \bar{K}} = k$.
So,
$$
\Pr\qty[\rho_K \text{ is maximally mixed}] = \Pr\qty[\rank A_{K \times \bar{K}} = k]
= \frac{f(k,n-k)}{2^{k(n-k)}} =\prod_{l=0}^{k-1}\qty(1-2^{l-n+k}).
$$
Let $X_K$ be the indicator random variable for the event that $\rho_K$ is maximally mixed, i.e.,
$$
X_K =
\begin{cases}
1 & \rho_K \text{ is maximally mixed}, \\
0 & \rho_K \text{ is not maximally mixed}.
\end{cases}
$$
Let $X = \sum_{K \in {[n] \choose k}}X_K$. By linearity of expectataion, we have
$$
\EE(X) = \sum_{K \in {[n] \choose k}}\EE(X_K) =
\sum_{K \in {[n] \choose k}}\Pr\qty[\rho_K \text{ is maximally mixed}]={n \choose k}
\prod_{l=0}^{k-1}\qty(1-2^{l-n+k}).
$$
So there exists an $n$-vertex graph, whose associated graph state has
at least ${n \choose k}\prod_{l=0}^{k-1}\qty(1-2^{l-n+k})$ many
 $k$-party reductions that are maximally mixed.  In other words, we have shown that
$$
\Qex(n,k) \ge {n \choose k}\prod_{l=0}^{k-1}\qty(1-2^{l-n+k}).
$$
It thus follows that
$$
\pi(n,k) \geq \prod_{l=0}^{k-1}\qty(1-2^{l-n+k}).
$$
Denote
$L(n,k) = \prod_{l=0}^{k-1}\qty(1-2^{l-n+k})$.
If $n = 2k$, we have $\lim_{k \to \infty}L(2k,k)
= \prod_{l=1}^{\infty}\qty(1-\frac{1}{2^l}) \simeq 0.288788095$
\cite{finch2003mathematical,OEIS048651}. So, we conclude that
$$
\lim_{k \to \infty}\pi(2k,k) \geq 0.288788095.
$$
\section{comparison under the potential of multipartite entanglement}\label{sec:com}
In this section, we relate our problem of investigating $\Qex(n)$ to a well-studied problem in the literature.

For $n$-qubit states, a \emph{maximally multipartite entangled state} (MMES) is defined to be a minimizer of the potential of multipartite entanglement by Facchi \emph{et al.} \cite{PhysRevA.77.060304}:
 \begin{equation}
 	\pi_{ME}=\binom{n}{\floor{\frac{n}{2}}}^{-1}\sum_{|A|=\floor{\frac{n}{2}}}\pi_A ,
 \end{equation}
where $\pi_A= \Tr \rho^2_A$.
This quantity is related to the (average) linear entropy $S_L=(1-\pi_{ME})2^{\floor{\frac{n}{2}}}/(2^{\floor{\frac{n}{2}}}-1)$ introduced in \cite{PhysRevA.69.052330}. Notice that the minimizer of the potential of multipartite entanglement maximizes the average linear entropy.

First, we note an  interesting phenomenon  that some constructions of $4$, $7$, and $8$-qubit PEME states are exactly MMESs \cite{Zha_2011,Zha_2012,Zha_2018}.
Take the $8$-qubit PEME state $\ket{T_4}$ in $(\mathbb{C}^2)^{\otimes 8}$  constructed in \autoref{sec:graphstate} as an example. The $8$-qubit state has $70$ four-party reductions $ \rho_A$, $56$ of which are maximally mixed. The value of $\pi_A$ for maximally mixed $\rho_A$ is $1/16$.
	Continuing the notation from \autoref{sec:graphstate}, there are two types of $A_{K\times \bar{K}}$ with $\card{K}=4$ that are not of full rank, six $K$'s with $i=2$ and a singular $A_{K\times \bar{K}}$, whose value of $\pi_{A}$ is $1/4$, and eight $K$'s with $i=0$ and a singular $A_{K\times \bar{K}}$, whose value of $\pi_{A}$ is $1/8$.
 The $\pi_{ME}$ of this PEME state is equal to $3/35$, which is equal to the smallest value of $\pi_{ME}$ obtained in \cite{Zha_2018}.

Second, some known MMESs give a good lower bound for $\Qex(n)$. For example, the $4$, $7$, and $8$-qubit MMESs constructed in \cite{Zha_2011,Zha_2012,Zha_2018} reach the $\Qex(n)$, thus they are in fact PEME states. In \cite{Zha_2020}, the construction of a $9$-qubit MMES gives a lower bound $\Qex(9)\geq 110$. The best known bounds of $\Qex(n)$ for $4\leq n\leq 12$ are summarized in \autoref{tab1}.

Third, we mention that there exists an MMES which is not a PEME state. For example,
\begin{equation}
	\ket{\phi}=\frac{1}{2}(\ket{0000}+\ket{0111}+\ket{1001}+\ket{1110})
\end{equation}
and
\begin{equation}
	\ket{M_4}=\frac{1}{\sqrt{6}}(\ket{0011}+\ket{1100}+\omega(\ket{1010}+\ket{0101})+\omega^2(\ket{1001}+\ket{0110})),
\end{equation}
where $\omega=e^{2\pi i/3}$. It can be checked that both $\ket{\phi}$ and $\ket{M_4}$ are MMESs. However, $\ket{\phi}$ is a $2$-EME state but $\ket{M_4}$ is not \cite{Zha_2011}.
Conversely, a PEME quantum state may also not be an MMES. For example, one can check that the $3$-uniform state $\ket{\psi}$ obtained by the orthogonal array $M_8$ in \cite{PhysRevA.99.042332} is a PEME state with eight qubits.
However, it can be calculated that $\ket{\psi}$ have a value of $\pi_{ME}=56 \times 1/16 + 14 \times 1/4 = 1/10 > 3/35$, which means that $\ket{\psi}$ is not an MMES. This also implies that $\ket{\psi}$ and $\ket{T_4}$ are not LU-equivalent.

However, it is interesting to note that from the literature of known MMESs \cite{Zha_2011,Zha_2012,Zha_2018,Zha_2020},
at least one of the states achieving the lower bound listed in \autoref{tab1}  is an MMES.

\begin{table}
	\renewcommand{\arraystretch}{1.3}
	\setlength{\tabcolsep}{6.9mm}
	\centering
	\begin{tabular}{c|c|c|c|c|c|c|c}
	n & 4 & 7 & 8 & 9 & 10 & 11 & 12 \\
	\hline
	upper bound & 4 \footnote{Higuchi and Sudbery \cite{HIGUCHI2000213}.} & 32
	\footnote{Huber, G\"uhne and Siewert \cite{PhysRevLett.118.200502}.}
	  & 56 \footnote{From the upper bound of $\ex_{2m}(4m,K_{2m+1}^{2m})$ (\autoref{nonevenmod4} and \autoref{eqq}).} & 120 \footnote{From the upper bound of $\ex_{k}(2k+1,H_{k})$ (\autoref{thmodd} and \autoref{eq2k1}).} & 240 \footnote{From the upper bound of $\ex_{k}(2k,K_{k+2}^{k})$ (\autoref{nonevenmod42} and \autoref{eq2k}).} & 461 \footnote {AME states of eleven qubits do not exist \cite{PhysRevA.69.052330}.} & 792 \footnotemark[3] \\
	\hline
	lower bound & 4 \footnotemark[1] & 32 \footnote{Zha \emph{et al.} \cite{Zha_2012} and Goyeneche and Życzkowski \cite{RN407}.}  & 56 \footnote{Zha \emph{et al.} \cite{Zha_2018} and Li and Wang \cite{PhysRevA.99.042332}.} & 112 \footnote{From constructions by known graph states \cite{Lars}.} & 200 \footnotemark[9] & 396 \footnotemark[9]
	& 540 \footnote{From the (1,3,6)-circulant graph state \cite{DANIELSEN20061351}.}\\
\end{tabular}
	\caption{Values of $\Qex(n)$ for small $n$.
		}
	\label{tab1}
\end{table}

\section{conclusion}\label{sec:con}

In summary, we give bounds on the maximum number of maximally mixed half-body reductions in an arbitrary qubit pure state, where the upper bound is given by the Tur\'an's number and the lower bound is given by explicitly constructing graph states. This allows us to show that  that $56$ is the largest number of maximally mixed $4$-party reductions in an $8$-qubit pure state.
For future work, the connected Tur\'an's problem and the construction of graph states are not fully resolved. Better solutions to these problems would lead to better results on the problem we focus. In particular, it is very interesting to determine the $\Qex(n)$ for a specific $n$.
 The estimation of the values of $\Qex(n,k)$ for $k < \floor{n/2}$ is also worth studying.

 	\section{Acknowledgments}
 	 The research of W. Zhang, Y. Ning and X. Zhang was supported by the
 	Innovation Program for Quantum Science and Technology 2021ZD0302902,
 	the NSFC under Grants No. 12171452 and No. 12231014, and the National
 	Key Research and Development Programs of China 2023YFA1010200 and
 	2020YFA0713100.
 	The research of F. Shi was supported by the HKU Seed Fund
 	for Basic Research for New Staff via Project 2201100596, Guangdong Natural
 	Science Fund via Project 2023A1515012185, National Natural Science Foun
 	dation of China (NSFC) via Project No. 12305030 and No. 12347104, Hong
 	Kong Research Grant Council (RGC) via No. 27300823, N\_HKU718/23,
 	and R6010-23, Guangdong Provincial Quantum Science Strategic Initiative
 	GDZX2200001.

\appendix
\section{A proof of \autoref{nonevenmod42}}\label{appendix1}
\begin{proof} Let $\ket{\psi}$ be an $n$-qubit pure state, where $n=2k$, $k \geq 2$ and $k\neq 3$.
	The proof is by contradiction. Suppose that there exists  $\mathcal{A} \subset [n]$ with
$\card{\mathcal{A}} = k+2$ such that the reduction of $\ket{\psi}$ to each $k$-party contained in $\mathcal{A}$ is maximally mixed.
	WLOG, assume $\mathcal{A} = [k+2]$.
 Then for any $\mathcal{B} \subset \mathcal{A}$ with $\card{\mathcal{B}} = k$, by the fact that $\card{\bar{\mathcal{B}}}=k=\card{\mathcal{B}}$ and $\rho_\mathcal{B}$  share the same spectrum as $\rho_{\bar{\mathcal{B}}}$, we know that $\rho_{\bar{\mathcal{B}}}$ is maximally mixed as well.  Since $\bar{\mathcal{A}} \subset \bar{\mathcal{B}}$,
	we know that $\rho_{\bar{\mathcal{A}}}$ is maximally mixed.
	By \autoref{eqrhoA2}, we have
	\begin{equation}\label{4n+2eq01}
		\rho_{\mathcal{A}}^2={2^{2-k}}\rho_{\mathcal{A}}.
	\end{equation}
	
	Since every reduction of $\rho_{\mathcal{A}}$ to $k$-party is
	maximally mixed, we have
		\begin{equation}\label{4n+2eq03apa}
		\rho_{\mathcal{A}}=\frac{1}{2^{k+2}}( I_2+\sum_{j=1}^{k+2}P_{k+1}^{(\bar{j})}\otimes  I_2 ^{(j)}+P_{k+2}),
	\end{equation}where $(\bar{j}):=[k+2]\setminus\{j\}$, and $ I_2 ^{(j)}$ means an identity in the $j$th party. Similarly, for every  $(k+1)$-party $ \mathcal{C} \subset \mathcal{A} $,
	\begin{equation}\label{4n+2eq02}
		\rho_{\mathcal{C}}=\frac{1}{2^{k+1}}( I_2+P_{k+1}).
	\end{equation}


	By \autoref{eqschdec},
	a Schmidt decomposition of the pure state $\ket{\psi}$ across the bipartition $\mathcal{A}\mid \bar{\mathcal{A}}$ yields

	\begin{equation}\label{4n+2eq04}
		\rho_{\mathcal{A}}\otimes I_2^{\otimes(k-2)}\ket{\psi}_{\mathcal{A}\bar{\mathcal{A}}}=2^{2-k}\ket{\psi}_{\mathcal{A}\bar{\mathcal{A}}},
	\end{equation}
   and across the bipartition $\mathcal{C}\mid \bar{\mathcal{C}}$ for any $ \mathcal{C} \subset \mathcal{A} $ with $\card{\mathcal{C}}=k+1$ yields
	\begin{equation}\label{4n+2eq05}
		\rho_{\mathcal{C}}\otimes I_2^{\otimes(k-1)}\ket{\psi}_{\mathcal{C}\bar{\mathcal{C}}}=2^{1-k}\ket{\psi}_{\mathcal{C}\bar{\mathcal{C}}}.
	\end{equation}

	Substituting   \autoref{4n+2eq03apa} into \autoref{4n+2eq04}, we have
	\begin{equation}\label{4n+2eq06}
		\frac{1}{2^{k+2}}( I_2+\sum_{j=1}^{k+2}P_{k+1}^{(\bar{j})}\otimes  I_2 ^{(j)}+P_{k+2})\otimes  I_2^{\otimes(k-2)}\ket{\psi}=2^{2-k}\ket{\psi}.
	\end{equation}
		Substituting   \autoref{4n+2eq02} into \autoref{4n+2eq05}, we have for any $ \mathcal{C} \subset \mathcal{A} $ with $\card{\mathcal{C}}=k+1$,
	\begin{equation}\label{4n+2eq07}
		\frac{1}{2^{k+1}}( I_2+P_{k+1})\otimes  I_2^{\otimes(k-1)}\ket{\psi}=2^{1-k}\ket{\psi}.
	\end{equation}
Note that $P_{k+1}$ in \autoref{4n+2eq07} is indeed $P_{k+1}^{(\bar{j})}$ for some $j\in [k+2]$. So 	\begin{equation}\label{4n+2eq081}
			 P_{k+1}^{(\bar{j})}\otimes  I_2^{\otimes (k-1)} \ket{\psi}= 3 \ket{\psi}
	\end{equation} for each $j\in [k+2]$.
Substituting \autoref{4n+2eq081} into \autoref{4n+2eq06}, we get
	\begin{equation}\label{4n+2eq08apa}
			P_{k+2}\otimes  I_2^{\otimes (k-2)} \ket{\psi}= 3(3-k) \ket{\psi}.
	\end{equation}
	
	Further, combining  \autoref{4n+2eq01} and \autoref{4n+2eq03apa}, we obtain
	
	\begin{equation}\label{4n+2eq03}
	( I_2+\sum_{j=1}^{k+2}P_{k+1}^{(\bar{j})}\otimes  I_2 ^{(j)}+P_{k+2})( I_2+\sum_{j=1}^{k+2}P_{k+1}^{(\bar{j})}\otimes  I_2 ^{(j)}+P_{k+2})=16( I_2+\sum_{j=1}^{k+2}P_{k+1}^{(\bar{j})}\otimes  I_2 ^{(j)}+P_{k+2}).
	\end{equation}
	We consider terms of odd weight on both sides of
\autoref{4n+2eq03}, and find a contradiction. Denote $M:=\sum_{j=1}^{k+2}P_{k+1}^{(\bar{j})}\otimes  I_2 ^{(j)}$. By \autoref{rule}, each term in the anticommutator $\{M,P_{k+2}\} $ from the left hand side has odd weight.  In the following, our discussion depends on whether $k$ is odd or even.

For odd $k$, collecting terms of odd weight on both sides of
\autoref{4n+2eq03} gives
	\begin{equation} \label{eqoddapa}
	2 P_{k+2} +	\{M,P_{k+2}\} = 16 P_{k+2}, \text{ that is, }\{M,P_{k+2}\} = 14 P_{k+2}.
	\end{equation}		
Doing a tensor of \autoref{eqoddapa} with the identity and multiplying $\ket{\psi}$ from the right leads to
	\begin{equation}\label{11apa}
		\{M,P_{k+2}\} \otimes  I_2^{\otimes (k-2)}\ket{\psi}= 14 P_{k+2} \otimes  I_2^{\otimes (k-2)}\ket{\psi}.
	\end{equation}
By \autoref{4n+2eq081} and \autoref{4n+2eq08apa}, we know that
\begin{equation}\label{eqmp}
  \left(\left(P_{k+1}^{(\bar{j})}\otimes  I_2 ^{(j)}\right)P_{k+2}\right)\otimes  I_2^{\otimes (k-2)}\ket{\psi}=\left(P_{k+2}\left(P_{k+1}^{(\bar{j})}\otimes  I_2 ^{(j)}\right)\right)\otimes  I_2^{\otimes (k-2)}\ket{\psi}=9(3-k)\ket{\psi}.
\end{equation}
Then \autoref{11apa} becomes
\[2(k+2)\times 9(3-k)\ket{\psi}=14\times 3(3-k)\ket{\psi},\] which is not possible except when $k=3$.


For even  $k$, collecting terms of odd weight on both sides of
\autoref{4n+2eq03} gives
\begin{equation} \label{eqevenapa}
	2 M +	\{M, P_{k+2}\} = 16 M, \text{ that is, }\{M, P_{k+2}\} = 14 M.
\end{equation}
Doing a tensor of \autoref{eqevenapa} with the identity and multiplying $\ket{\psi}$ from the right leads to
\begin{equation}\label{15apa}
	\{M, P_{k+2}\} \otimes  I_2^{\otimes (k-2)}\ket{\psi}= 14 M \otimes  I_2^{\otimes (k-2)}\ket{\psi}.
\end{equation}
By \autoref{4n+2eq08apa} and \autoref{eqmp}, we have
\[2(k+2)\times 9(3-k)\ket{\psi}=14\times 3(k+2)\ket{\psi},\]
which is impossible.

%
%
%

\end{proof}

\section{A proof of \autoref{thmodd}}\label{appendix2}
\begin{proof}
	Let $\ket{\psi}$ be an $n$-qubit pure state, where $n=2k+1$, $k \geq 3$ and $k\neq 5$. The proof is by contradiction. Suppose that there exists  $\mathcal{A} \subset [n]$ with
	$\card{\mathcal{A}} = k+2$ such that any reduction of $\ket{\psi}$ to  $k$-party $\mathcal{B}$ with $\card{\mathcal{B}\cap \mathcal{A}}=1$ or $k$ is maximally mixed.	WLOG, assume $\mathcal{A} = [k+2]$.
    Let $\bar{\mathcal{A}} = [2k+1]\setminus[k+2]$ and let $\mathcal{B}= \bar{\mathcal{A}} \cup \{1\}$.
    By the fact that $\rho_{{\mathcal{B}}}$ is maximally mixed, we have that $\rho_{\bar{\mathcal{A}}}$ is maximally mixed.
	By \autoref{eqrhoA2} we have

	\begin{equation}\label{4n+2eq01apb}
		\rho_{\mathcal{A}}^2={2^{1-k}}\rho_{\mathcal{A}}.
	\end{equation}
	
	Since every reduction of $\rho_{\mathcal{A}}$ to $k$-party  is
maximally mixed, we have
		\begin{equation}\label{4n+2eq03apb}
		\rho_{\mathcal{A}}=\frac{1}{2^{k+2}}( I_2+\sum_{j=1}^{k+2}P_{k+1}^{(\bar{j})}\otimes  I_2 ^{(j)}+P_{k+2}),
	\end{equation}where $(\bar{j}):=[k+2]\setminus\{j\}$, and $ I_2 ^{(j)}$ means an identity in the $j$th party. Similarly, for every  $(k+1)$ party $ \mathcal{C} \subset \mathcal{A} $,
	
	\begin{equation}\label{4n+2eq02apb}
		\rho_{\mathcal{C}}=\frac{1}{2^{k+1}}( I_2+P_{k+1}).
	\end{equation}

	
		By \autoref{eqschdec},
	a Schmidt decomposition of the pure state $\ket{\psi}$ across the bipartition $\mathcal{A}\mid \bar{\mathcal{A}}$ yields
%
	\begin{equation}\label{4n+2eq04apb}
		\rho_{\mathcal{A}}\otimes I_2^{\otimes(k-1)}\ket{\psi}_{\mathcal{A}\bar{\mathcal{A}}}=2^{1-k}\ket{\psi}_{\mathcal{A}\bar{\mathcal{A}}},
	\end{equation}
	  and across the bipartition $\mathcal{C}\mid \bar{\mathcal{C}}$ for any $ \mathcal{C} \subset \mathcal{A} $ with $\card{\mathcal{C}}=k+1$ yields
	\begin{equation}\label{4n+2eq05apb}
		\rho_{\mathcal{C}}\otimes I_2^{\otimes(k)}\ket{\psi}_{\mathcal{C}\bar{\mathcal{C}}}=2^{-k}\ket{\psi}_{\mathcal{C}\bar{\mathcal{C}}}.
	\end{equation}


		Substituting   \autoref{4n+2eq03apb} into \autoref{4n+2eq04apb}, we have
	\begin{equation}\label{4n+2eq06apb}
		\frac{1}{2^{k+2}}( I_2+\sum_{j=1}^{k+2}P_{k+1}^{(\bar{j})}\otimes  I_2 ^{(j)}+P_{k+2})\otimes  I_2^{(k-1)}\ket{\psi}=2^{1-k}\ket{\psi}.
	\end{equation}
		Substituting   \autoref{4n+2eq02apb} into \autoref{4n+2eq05apb}, we have for any $ \mathcal{C} \subset \mathcal{A} $ with $\card{\mathcal{C}}=k+1$,
	\begin{equation}\label{4n+2eq07apb}
		\frac{1}{2^{k+1}}( I_2+P_{k+1})\otimes  I_2^{k}\ket{\psi}=2^{-k}\ket{\psi}.
	\end{equation}
	Note that $P_{k+1}$ in \autoref{4n+2eq07apb} is indeed $P_{k+1}^{(\bar{j})}$ for some $j\in [k+2]$. So
	\begin{equation}\label{4n+2eq08apb1}
 P_{k+1}^{(\bar{j})}\otimes  I_2^{\otimes k} \ket{\psi}=  \ket{\psi},
	\end{equation}
	for each $j\in [k+2]$.
	Substituting \autoref{4n+2eq08apb1} into \autoref{4n+2eq06apb}, we get
		\begin{equation}\label{4n+2eq08apb2}
	P_{k+2}\otimes  I_2^{\otimes (k-1)} \ket{\psi}= (5-k) \ket{\psi}.
	\end{equation}
		Further, combining  \autoref{4n+2eq01apb} and \autoref{4n+2eq03apb}, we obtain
	\begin{equation}\label{4n+2eq09apb}
( I_2+\sum_{j=1}^{k+2}P_{k+1}^{[j]}\otimes  I_2 ^{(j)}+P_{k+2})( I_2+\sum_{j=1}^{k+2}P_{k+1}^{[j]}\otimes  I_2 ^{(j)}+P_{k+2})=8( I_2+\sum_{j=1}^{k+2}P_{k+1}^{[j]}\otimes  I_2 ^{(j)}+P_{k+2}).
	\end{equation}
	
	We consider terms of odd weight on both sides of
\autoref{4n+2eq09apb}, and find a contradiction. Denote $M:=\sum_{j=1}^{k+2}P_{k+1}^{(\bar{j})}\otimes  I_2 ^{(j)}$. By \autoref{rule}, each term in the anticommutator $\{M,P_{k+2}\} $ from the left hand side has odd weight.  In the following, our discussion depends on whether $k$ is odd or even.

	For odd $k$, collecting terms of odd weight on both sides of
	\autoref{4n+2eq09apb} gives
		\begin{equation}\label{eqodd1apb}
	2 P_{k+2} +	\{M,P_{k+2}\} = 8 P_{k+2}, \text{ that is, }\{M,P_{k+2}\} = 3 P_{k+2}.
	\end{equation}
	Doing a tensor of \autoref{eqodd1apb} with the identity and multiplying $\ket{\psi}$ from the right leads to
	\begin{equation}\label{11apb}
		\{M, P_{k+2}\} \otimes  I_2^{\otimes (k-1)}\ket{\psi}= 3 P_{k+2} \otimes  I_2^{\otimes (k-1)}\ket{\psi}.
	\end{equation}

	By \autoref{4n+2eq08apb1} and \autoref{4n+2eq08apb2}, we know that
	\begin{equation}\label{eqmp2}
		\left(\left(P_{k+1}^{(\bar{j})}\otimes  I_2 ^{(j)}\right)P_{k+2}\right)\otimes  I_2^{\otimes (k-1)}\ket{\psi}=\left(P_{k+2}\left(P_{k+1}^{(\bar{j})}\otimes  I_2 ^{(j)}\right)\right)\otimes  I_2^{\otimes (k-1)}\ket{\psi}=(5-k)\ket{\psi}.
	\end{equation}
  Then \autoref{11apb} becomes \[2(k+2)\times (5-k)\ket{\psi}=3\times (5-k)\ket{\psi},\] which is not possible except when $k=5$.

For even  $k$, collecting terms of odd weight on both sides of
\autoref{4n+2eq09apb} gives
\begin{equation} \label{eqeven1apb}
	2 M +	\{M, P_{k+2}\} = 8 M, \text{ that is, }\{M, P_{k+2}\} = 3 M.
\end{equation}
Doing a tensor of \autoref{eqeven1apb} with the identity and multiplying $\ket{\psi}$ from the right leads to
\begin{equation}\label{15apb}
	\{M, P_{k+2}\} \otimes  I_2^{\otimes (k-1)}\ket{\psi}= 3 M \otimes  I_2^{\otimes (k-1)}\ket{\psi}.
\end{equation}
By \autoref{4n+2eq08apb1} and \autoref{eqmp2}, we have
\[2(k+2)\times (5-k)\ket{\psi}=3\times (k+2)\ket{\psi},\]
which is not possible except when $k=2$.

\end{proof}


\bibliographystyle{IEEEtran}
\bibliography{reference}
\end{document}